
\documentclass[11pt]{article}
\newcommand{\taille}{}
\bibliographystyle{plain}

\sloppy 

\topmargin0.7cm
\textheight20.4cm
\textwidth14cm
\oddsidemargin1.2cm

\usepackage{latexsym} 
\usepackage{theorem}
\usepackage{graphics}
\usepackage{graphicx}
\usepackage{amsmath}
\usepackage{amssymb}

\usepackage{placeins} 

\newtheorem{defeng}{Definition}[section]
\newtheorem{theorem}[defeng]{Theorem}

\newtheorem{lemma}[defeng]{Lemma}

{\theorembodyfont{\rmfamily} }
{\theorembodyfont{\rmfamily} }
{\theorembodyfont{\rmfamily} }
{\theoremstyle{break}\theorembodyfont{\rmfamily} }
{\theoremstyle{break}\theorembodyfont{\rmfamily} }

\newcommand{\tp}{\!-\!}

\newcounter{claim}

\newenvironment{proof}[1][]%
 {\noindent {\setcounter{claim}{0}\sc proof ---
   }{#1}{}}{\hfill$\Box$\vspace{2ex}} 


\newenvironment{claim}[1][]%
{\refstepcounter{claim}\vspace{1ex}\noindent{(\it\arabic{claim}){#1}{}}\it}{\vspace{1ex}}

\newenvironment{proofclaim}[1][]%
	{\noindent {}{#1}{}}{ This proves~(\arabic{claim}).\vspace{1ex}}

\title{Detecting induced subgraphs}
\date{October 9, 2007\\Revised June 2, 2008}

\author{
  Benjamin L\'ev\^eque\thanks{CNRS, Laboratoire G-SCOP, 46 Avenue F\'{e}lix
    Viallet, 38031~Grenoble~Cedex,
    France. (benjamin.leveque@g-scop.inpg.fr,
    Frederic.Maffray@g-scop.inpg.fr)} , 
  David Y. Lin\thanks{Princeton
    University, Princeton, NJ, 08544. dylin@princeton.edu} ,
  Fr\'ed\'eric Maffray$^{*}$, 
  Nicolas Trotignon\thanks{CNRS, Universit\'e Paris 7, Paris Diderot, LIAFA, Case
    7014, 75205 Paris Cedex 13,   France.   E-mail:
    nicolas.trotignon@liafa.jussieu.fr.\newline This work has
    been partially supported by ADONET network, a Marie Curie training
    network of the European Community.}}
    
\begin{document}

\maketitle

\abstract{An \emph{s-graph} is a graph with two kinds of edges:
  \emph{subdivisible} edges and \emph{real} edges.  A
  \emph{realisation} of an s-graph $B$ is any graph obtained by
  subdividing subdivisible edges of $B$ into paths of arbitrary length
  (at least one).  Given an s-graph $B$, we study the decision problem
  $\Pi_B$ whose instance is a graph $G$ and question is ``Does $G$
  contain a realisation of $B$ as an induced subgraph?''.  For
  several $B$'s, the complexity of $\Pi_B$ is known and here we give
  the complexity for several more.

Our NP-completeness proofs for $\Pi_B$'s rely on the NP-completeness
proof of the following problem.  Let $\cal S$ be a set of graphs and
$d$ be an integer.  Let $\Gamma_{\cal S}^d$ be the problem whose
instance is $(G, x, y)$ where $G$ is a graph whose maximum degree is
at most $d$, with no induced subgraph in $\cal S$ and $x, y \in V(G)$
are two non-adjacent vertices of degree~$2$.  The question is ``Does
$G$ contain an induced cycle passing through $x, y$?''.  Among several
results, we prove that $\Gamma^3_{\emptyset}$ is NP-complete.  We give
a simple criterion on a connected graph $H$ to decide whether
$\Gamma^{+\infty}_{\{H\}}$ is polynomial or NP-complete.  The
polynomial cases rely on the algorithm three-in-a-tree, due to
Chudnovsky and Seymour.}

\noindent{\bf AMS Mathematics Subject Classification:} 05C85, 68R10,
68W05, 90C35

\noindent {\bf Key words}: detecting, induced, subgraphs

\taille

\section{Introduction}

In this paper graphs are simple and finite.  A \emph{subdivisible
  graph} (\emph{s-graph} for short) is a triple $B = (V, D, F)$ such
that $(V, D \cup F)$ is a graph and $D \cap F = \emptyset$.  The edges
in $D$ are said to be \emph{real edges of $B$} while the edges in $F$
are said to be \emph{subdivisible edges of $B$}.  A \emph{realisation}
of $B$ is a graph obtained from $B$ by subdividing edges of $F$ into
paths of arbitrary length (at least one).  The problem $\Pi_B$ is the
decision problem whose input is a graph $G$ and whose question is
"Does $G$ contain a realisation of $B$ as an induced subgraph?''.  On
figures, we depict real edges of an s-graph with straight lines, and
subdivisible edges with dashed lines.

\begin{figure}
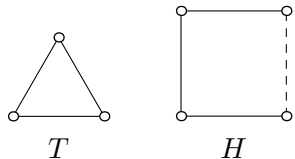

\center
\begin{tabular}{ccc}
\includegraphics{bigraphs.19}&
\hspace{10em}&
\includegraphics{bigraphs.20}\\
$T$&&
$H$
\end{tabular}
\caption{s-graphs yielding trivially polynomial
  problems\label{fig:trivial}}
\end{figure}

Several interesting instance of $\Pi_B$ are studied in the
literature.  For some of them, the existence of a polynomial time
algorithm is trivial, but efforts are devoted toward optimized
algorithms.  For example, Alon, Yuster and Zwick \cite{alon:triangle}
solve $\Pi_T$ in time $O(m^{1.41})$ (instead of the obvious $O(n^3)$
algorithm), where $T$ is the s-graph depicted on
Figure~\ref{fig:trivial}.  This problem is known as \emph{triangle
detection}.  Rose, Tarjan and Lueker~\cite{RTL76} solve
$\Pi_{H}$ in time $O(n+m)$ where $H$ is the s-graph depicted on
Figure~\ref{fig:trivial}.


\begin{figure}
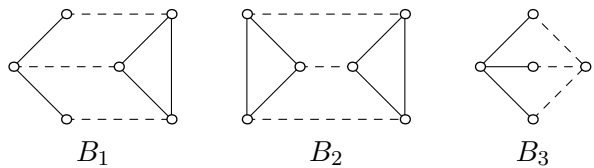

\center
\begin{tabular}{ccccc}
\includegraphics{bigraphs.17}&
\hspace{10em}&
\includegraphics{bigraphs.16}&
\hspace{10em}&
\includegraphics{bigraphs.18}\\
$B_1$&&
$B_2$&&
$B_3$
\end{tabular}
\caption{Pyramids, prisms and thetas\label{fig:ppt}}
\end{figure}

For some $\Pi_B$'s, the existence of a polynomial time algorithm is
non-trivial.  A \emph{pyramid} (resp.  \emph{prism}, \emph{theta}) is
any realisation of the s-graph $B_1$ (resp.  $B_2$, $B_3$) depicted on
Figure~\ref{fig:ppt}.  Chudnovsky and
Seymour~\cite{chudnovsky.c.l.s.v:reco} gave an $O(n^9)$-time algorithm
for $\Pi_{B_1}$ (or equivalently, for detecting a pyramid).  As far as
we know, that is the first example of a solution to a $\Pi_B$ whose
complexity is non-trivial to settle.  In contrast, Maffray and
Trotignon~\cite{maffray.t:reco} proved that $\Pi_{B_2}$ (or detecting
a prism) is NP-complete.  Chudnovsky and
Seymour~\cite{chudnovsky.seymour:theta} gave an $O(n^{11})$-time
algorithm for $P_{B_3}$ (or detecting a theta).  Their algorithm
relies on the solution of a problem called ``three-in-a-tree'', that
we will define precisely and use in Section~\ref{sec:bienstock}.  The
three-in-tree algorithm is quite general since it can be used to solve
a lot of $\Pi_B$ problems, including the detection of pyramids.

These facts are a motivation for a systematic study of $\Pi_B$.  A
further motivation is that very similar s-graphs can lead to a
drastically different complexity.  The following example may be more
striking than pyramid/prism/theta~: $\Pi_{B_4}, \Pi_{B_6}$ are
polynomial and $\Pi_{B_5}, \Pi_{B_7}$ are NP-complete, where $B_4,
\dots, B_7$ are the s-graphs depicted on Figure~\ref{fig:antenna}.
This will be proved in section~\ref{sec:antenna}.

\begin{figure}
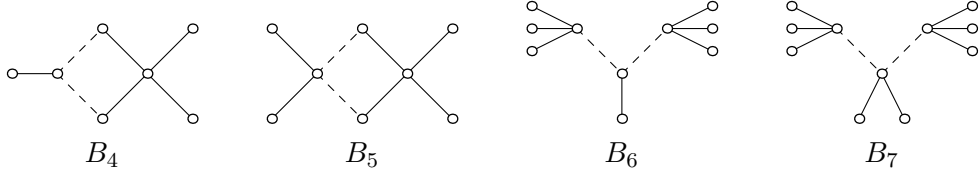

\center
\begin{tabular}{ccccccc}
\includegraphics{bigraphs.15}&
\hspace{10em}&
\includegraphics{bigraphs.14}&
\hspace{10em}&
\includegraphics{bigraphs.22}&
\hspace{10em}&
\includegraphics{bigraphs.21}\\
$B_4$&&
$B_5$&&
$B_6$&&
$B_7$
\end{tabular}
\caption{Some s-graphs with pending edges\label{fig:antenna}}
\end{figure}

\subsection*{Notation and remarks}

By $C_k$ ($k\geq 3$) we denote the cycle on $k$ vertices, by $K_l$
($l\geq 1$) the clique on $l$ vertices.  A \emph{hole} in a graph is
an induced cycle on at least four vertices.  We denote by $I_l$
($l\geq 1$) the tree on $l+5$ vertices obtained by taking a path of
length $l$ with ends $a, b$, and adding four vertices, two of them
adjacent to $a$, the other two to $b$; see Figure~\ref{fig:I1}.  When
a graph $G$ contains a graph isomorphic to $H$ as an induced subgraph,
we will often say ``$G$ contains an $H$''.

\begin{figure}
  \center
  \includegraphics{smallgraphs.7}

  \caption{$I_1$\label{fig:I1}}  
\end{figure}

Let $(V, D, F)$ be an s-graph.  Suppose that $(V, D \cup F)$ has a
vertex of degree one incident to an edge $e$.  Then $\Pi_{(V, D
  \cup\{e\}, F \setminus\{e\})}$ and $\Pi_{(V,D\setminus\{e\},
  F\cup\{e\})}$ have the same complexity, because a graph $G$ contains
a realisation of $(V, D \cup\{e\}, F\setminus\{e\})$ if and only if it
contains a realisation of $(V, D\setminus\{e\}, F\cup\{e\})$.  For the
same reason, if $(V, D \cup F)$ has a vertex of degree two incident to
the edges $e\neq f$ then $\Pi_{(V, D \setminus \{e\} \cup\{f\}, F
  \setminus\{f\} \cup \{e\})}$, $\Pi_{(V, D \setminus \{f\} \cup\{e\},
  F \setminus\{e\} \cup \{f\})}$ and $\Pi_{(V, D \setminus \{e, f\}, F
  \cup \{e, f\})}$ have the same complexity.  If $|F| \leq 1$ then
$\Pi_{(V, D, F)}$ is clearly polynomial.  Thus in the rest of the
paper, we will consider only s-graphs $(V, D, F)$ such that:

\begin{itemize}
\item 
  $|F| \geq 2$;
\item 
  no vertex of degree one is incident to an edge of $F$;
\item
  every induced path of $(V, D\cup F)$ with all interior vertices of
  degree~2 and whose ends have degree $\neq 2$ has at most one edge in
  $F$. Moreover, this edge is incident to an end of the path;
\item
  every induced cycle with at most one vertex $v$ of degree at least
  $3$ in $(V, D\cup F)$ has at most one edge in $F$ and this edge is
  incident to $v$ if $v$ exists (if it does not then the cycle is a
  component of $(V, D\cup F)$).
\end{itemize}

\section{Detection of holes with prescribed vertices}
\label{sec:bienstock}

Let $\Delta(G)$ be the maximum degree of $G$.  Let $\cal S$ be a set
of graphs and $d$ be an integer.  Let $\Gamma_{\cal S}^d$ be the
problem whose instance is $(G, x, y)$ where $G$ is a graph such that
$\Delta(G) \leq d$, with no induced subgraph in $\cal S$ and $x, y \in
V(G)$ are two non-adjacent vertices of degree~$2$.  The question is
``Does $G$ contain a hole passing through $x, y$?''.  For simplicity,
we write $\Gamma_{\cal S}$ instead of $\Gamma_{\cal S}^{+\infty}$ (so,
the graph in the instance of $\Gamma_{\cal S}$ has unbounded degree).
Also we write $\Gamma^d$ instead of $\Gamma^d_{\emptyset}$ (so the
graph in the instance of $\Gamma^d$ has no restriction on its induced
subgraphs).  Bienstock~\cite{bienstock:evenpair} proved that $\Gamma =
\Gamma_{\emptyset}$ is NP-complete.  For ${\cal S} = \{K_3\}$ and
${\cal S} = \{K_{1, 4}\}$, $\Gamma_{\cal S}$ can be shown to be
NP-complete, and a consequence is the NP-completeness of several
problems of interest: see~\cite{maffray.t:reco}
and~\cite{maffray.t.v:3pcsquare}.

In this section, we try to settle $\Gamma^d_{\cal S}$ for as many
$\cal S$'s and $d$'s as we can.  In particular, we give the complexity
of $\Gamma_{\cal S}$ when $\cal S$ contains only one connected graph
and of $\Gamma^d$ for all $d$.  We also settle $\Gamma^d_{\cal S}$ for
some cases when $\cal S$ is a set of cycles. The polynomial cases are
either trivial, or are a direct consequence of an algorithm of
Chudnovsky and Seymour. The NP-complete cases follow from several
extensions of Bienstock's construction.

\subsection{Polynomial cases}

Chudnovsky and Seymour~\cite{chudnovsky.seymour:theta} proved that the
problem whose instance is a graph $G$ together with three vertices $a,
b, c$ and whose question is "Does $G$ contain a tree passing through
$a, b, c$ as an induced subgraph?" can be solved in time $O(n^4)$. We
call this algorithm ``three-in-a-tree''. Three-in-a-tree can be used
directly to solve $\Gamma_{\cal S}$ for several $\cal S$'s.  Let us
call \emph{subdivided claw} any tree with one vertex $u$ of degree~3,
three vertices $v_1, v_2, v_3$ of degree~1 and all the other vertices
of degree~2.

\begin{theorem}
  \label{th:clawpoly}
Let $H$ be a graph on $k$ vertices that is either a path or a
subdivided claw.  There is an $O(n^{k})$-time algorithm for
$\Gamma_{\{H\}}$.
\end{theorem}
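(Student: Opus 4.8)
The plan is to treat the two cases of the statement separately, in both exploiting that $x$ and $y$ have degree $2$. Write $N(x)=\{x_1,x_2\}$ and $N(y)=\{y_1,y_2\}$. Since a hole has length at least $4$ and $x$ has degree $2$, any hole through $x$ uses exactly the edges $xx_1,xx_2$; in particular, if $x_1x_2\in E(G)$ (or $y_1y_2\in E(G)$) no hole through $x$ (resp. $y$) exists and we answer ``no''. So assume $x_1\not\sim x_2$ and $y_1\not\sim y_2$. As $x$ has no neighbour outside $\{x_1,x_2\}$, a hole through $x$ and $y$ is exactly an induced path of $G-x$ from $x_1$ to $x_2$ whose interior contains $y$ (adding $x$ back closes it into an induced cycle); there are two pairings to try, according to which of $y_1,y_2$ lies on the $x_1$-side. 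I would fix this reduction once and for all.

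If $H=P_k$, then $G$ has no induced path on $k$ vertices, so every hole has at most $k$ vertices: deleting one vertex of a hole on $m$ vertices leaves an induced path on $m-1$ vertices, forcing $m-1\le k-1$. Hence I would simply enumerate all vertex sets of size at most $k$ that contain $x$ and $y$ — there are $O(n^{k-2})$ of them — and test each for inducing a hole through $x$ and $y$. This settles the path case in time $O(n^{k-2})$, well within the claimed bound.

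Now let $H$ be a subdivided claw with legs of lengths $a\le b\le c$, so $k=a+b+c+1$, and observe that $G$ then contains no induced subdivided claw whose three legs have lengths at least $a,b,c$ (truncating longer legs leaves an induced subgraph isomorphic to $H$). The engine is three-in-a-tree applied to $G-x$ with terminals $x_1,x_2,y$: a hole through $x,y$ certainly yields such an induced tree, namely the path $x_1\cdots y\cdots x_2$. For the converse I would take a \emph{minimal} induced tree $T$ through the three terminals; such a $T$ is either a path or a subdivided claw with the terminals as its leaves. If $T$ is a subdivided claw centred at some vertex $v$, its three legs are internally disjoint induced paths toward $x_1,x_2,y$, i.e.\ an induced subdivided claw of $G$, which contains $H$ as soon as the legs are long enough. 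If $T$ is a path but with, say, $x_1$ in its interior, then $T\cup\{x\}$ forms a ``tadpole'': the arc of $T$ from $x_1$ to $x_2$ closes through $x$ into a hole $C'$, while the arc from $x_1$ to $y$ is a tail attached only at $x_1$ and anticomplete to $C'\setminus\{x_1\}$. Splitting $C'$ at $x_1$ into two arcs and keeping the tail again produces an induced subdivided claw centred at $x_1$, which contains $H$ once the tail and the two arcs are long enough. Thus, apart from short exceptional pieces, the only surviving minimal tree is a path with $y$ in its interior, which is precisely a hole through $x$ and $y$.

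To make this algorithmic I would guess the short exceptional pieces explicitly: the portions of the hole of length below $a,b,c$ lying next to $x_1,x_2,y$ that could hide a branch of $T$ or a mis-placed endpoint. Each guess freezes a bounded set of vertices, after which a single call to three-in-a-tree on the residual instance decides whether the rest of the hole exists; by the previous paragraph any tree it returns must be the desired $y$-internal path, since every alternative would complete, together with the guessed pieces, an induced subdivided claw on legs of lengths at least $a,b,c$, hence an induced $H$ in $G$. Running three-in-a-tree in time $O(n^4)$ for each of $O(n^{k-4})$ guesses gives total time $O(n^{k})$. I expect the main obstacle to be exactly this calibration: choosing the guessed set small enough that the number of guesses is $O(n^{k-4})$ rather than larger, yet rich enough that outside it every branch and every wrong endpoint provably yields three legs of lengths at least $a,b,c$; this in turn forces one to check carefully that the arcs extracted from the hole (and from the tadpole $C'$) are pairwise anticomplete, so that the subdivided claws produced are genuinely induced and really certify an induced copy of $H$.
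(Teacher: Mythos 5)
Your path case is correct and is the same as the paper's (if $G$ is $P_k$-free every hole has at most $k$ vertices, so brute force suffices). Your subdivided-claw case also follows, in outline, exactly the paper's strategy: call three-in-a-tree after guessing bounded pieces of the hole next to the terminals, and use $H$-freeness to argue that any minimal tree other than the ``good'' path would combine with the guessed pieces into an induced subdivided claw with all three legs long enough, hence an induced $H$. The problem is that the proposal stops precisely where the proof begins: the ``calibration'' you yourself flag as the main obstacle --- what exactly to guess, which terminals to feed to three-in-a-tree, which neighbourhoods to delete, and the case analysis showing that every branch or misplaced endpoint really produces legs of lengths at least $a,b,c$ --- is the entire technical content of the paper's argument, and you leave it undone.

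Moreover, there is a concrete reason the scheme as you describe it cannot be completed without one further idea: holes in which $x$ and $y$ are close together along the hole. If the guessed segments must be long (so that bad trees yield legs of lengths at least $a,b,c$), then a hole through $x,y$ of length smaller than about $k$ carries no such segments, every guess is inconsistent with it, and the algorithm answers NO on a YES-instance. If instead you allow short guesses, the $H$-freeness contradiction evaporates: in an $H$-free graph with no hole through $x,y$ there can perfectly well exist a minimal induced tree through $x_1,x_2,y$ that is a claw (or a wrongly ordered path) with short legs --- for instance a vertex $v$ adjacent to $x_1$, $x_2$ and to a short path ending at $y$ --- so three-in-a-tree answers YES on a NO-instance, and it cannot be asked to return only ``good'' trees. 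The paper escapes this trap with a first phase that your plan is missing: it separately detects, by brute force, any hole in which the distance between $x$ and $y$ along the hole is at most $k-2$ (guess the short side entirely, delete the neighbourhood of its interior, then look for a shortest path closing the hole), and only afterwards, knowing that in any remaining hole $x$ and $y$ are at distance at least $k-1$, runs the guessing scheme: an induced path $x_0,x_1,\dots,x_{k_1+1}$ through $x$ and an induced path $y_0,y_1,\dots,y_{k_2+k_3}$ through $y$ ($k-4$ guessed vertices in all), with three-in-a-tree called on the asymmetric terminals $x_1$, $y_0$, $y_{k_2+k_3}$ after suitable deletions, the $y$-segment supplying two legs (of lengths $k_2$ and $k_3$) and the $x$-segment the third (of length $k_1$) in every contradiction case. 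Without this two-phase structure, or some substitute for it, the argument you sketch gives wrong answers in both directions.
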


\begin{proof}
Here is an algorithm for $\Gamma_{\{H\}}$.  Let $(G, x, y)$ be an
instance of $\Gamma_H$.  If $H$ is a path on $k$ vertices then every
hole in $G$ is on at most $k$ vertices.  Hence, by a brute-force
search on every $k$-tuple, we will find a hole through $x, y$ if there
is any.  Now we suppose that $H$ is a subdivided claw.  So $k\ge 4$.
For convenience, we put $x_1=x$, $y_1=y$.  Let $x_0, x_2$ (resp.
$y_0, y_2$) be the two neighbors of $x_1$ (resp.  $y_1$).  

First check whether there is in $G$ a hole $C$ through $x_1, y_1$ such
that the distance between $x_1$ and $y_1$ in $C$ is at most $k-2$.  If
$k=4$ or $k=5$ then $\{x_0, x_1, x_2, y_0, y_1, y_2\}$ either induces
a hole (that we output) or a path $P$ that is contained in every hole
through $x,y$.  In this last case, the existence of a hole through $x,
y$ can be decided in linear time by deleting the interior of $P$, deleting the
neighbors in $G\setminus P$ of the interior vertices of $P$ and by checking
the connectivity of the resulting graph.  Now suppose $k\ge 6$.  For
every $l$-tuple $(x_3, \dots, x_{l+2})$ of vertices of $G$, with
$l\leq k-5$, test whether $P = x_0 \tp x_1 \tp \cdots \tp x_{l+2} \tp
y_2 \tp y_1 \tp y_0$ is an induced path, and if so delete the interior
vertices of $P$ and their neighbors except $x_0, y_0$, and look for a
shortest path from $x_0$ to $y_0$.  This will find the desired hole if
there is one, after possibly swapping $x_0, x_2$ and doing the work
again.  This takes time $O(n^{k-3})$.

Now we may assume that in every hole through $x_1,y_1$, the distance
between $x_1, y_1$ is at least $k-1$.

Let $k_i$ be the length of the unique path of $H$ from $u$ to $v_i$,
$i=1, 2, 3$.  Note that $k = k_1 + k_2 + k_3 + 1$.  Let us check every
$(k-4)$-tuple $z = (x_3, \dots, x_{k_1+1}, y_3, \dots, y_{k_2 + k_3})$
of vertices of $G$.  For such a $(k-4)$-tuple, test whether $x_0 \tp
x_1 \tp \cdots \tp x_{k_1+1}$ and $P = y_0 \tp y_1 \tp \cdots \tp
y_{k_2 + k_3}$ are induced paths of $G$ with no edge between them
except possibly $x_{k_1 + 1}y_{k_2 + k_3}$.  If not, go to the next
$(k-4)$-tuple, but if yes, delete the interior vertices of $P$ and
their neighbors except $y_0, y_{k_2+k_3}$.  Also delete the neighbors
of $x_2, \dots, x_{k_1}$, except $x_1, x_2, \dots, x_{k_1},
x_{k_1+1}$.  Call $G_z$ the resulting graph and run three-in-a-tree in
$G_z$ for the vertices $x_1, y_{k_2 + k_3}, y_0$.  We claim that the
answer to three-in-a-tree is YES for some $(k-4)$-tuple if and only if
$G$ contains a hole through $x_1, y_1$ (after possibly swapping $x_0,
x_2$ and doing the work again).

To prove this, first assume that $G$ contains a hole $C$ through $x_1,
y_1$ then up to a symmetry this hole visits $x_0, x_1, x_2, y_2, y_1,
y_0$ in this order.  Let us name $x_3, \dots, x_{k_1+1}$ the vertices
of $C$ that follow after $x_1, x_2$ (in this order), and let us name
$y_3, \dots, y_{k_2+k_3}$ those that follow after $y_1, y_2$ (in
reverse order).  Note that all these vertices exist and are pairwise
distinct since in every hole through $x_1, y_1$ the distance between
$x_1, y_1$ is at least $k-1$.  So the path from $y_0$ to $y_{k_2+k_3}$
in $C\setminus y_1$ is a tree of $G_z$ passing through $x_1, y_{k_2 +
  k_3}, y_0$, where $z$ is the $(k-4)$-tuple $(x_3, \dots, x_{k_1+1},
y_3, \dots, y_{k_2 + k_3})$.

Conversely, suppose that $G_z$ contains a tree $T$ passing through
$x_1, y_{k_2 + k_3}, y_0$, for some $(k-4)$-tuple $z$.  We suppose
that $T$ is vertex-inclusion-wise minimal.  If $T$ is a path visiting
$y_0, x_1, y_{k_2 + k_3}$ in this order, then we obtain the desired
hole of $G$ by adding $y_1, y_2, \dots, y_{k_2+k_3-1}$ to $T$.  If $T$
is a path visiting $x_1, y_0, y_{k_2 + k_3}$ in this order, then we
denote by $y_{k_2+k_3+1}$ the neighbor of $y_{k_2 + k_3}$ along $T$.
Note that $T$ contains either $x_0$ or $x_2$.  If $T$ contains $x_0$,
then there are three paths in $G$: $y_0 \tp T \tp x_0 \tp x_1 \tp
\cdots \tp x_{k_1}$, $y_0 \tp T \tp y_{k_2+k_3+1} \tp \cdots \tp
y_{k_3+2}$ and $y_0 \tp y_1 \tp \cdots \tp y_{k_3}$.  These three
paths form a subdivided claw centered at $y_0$ that is long enough to
contain an induced subgraph isomorphic to $H$, a contradiction.  If
$T$ contains $x_{2}$ then the proof works similarly with $y_0 \tp T
\tp x_{k_1+1} \tp x_{k_1} \tp \cdots \tp x_{1}$ instead of $y_0 \tp T
\tp x_0 \tp x_1 \tp \cdots \tp x_{k_1}$.  If $T$ is a path visiting
$x_1, y_{k_2 + k_3}, y_0$ in this order, the proof is similar, except
that we find a subdivided claw centered at $y_{k_2+k_3}$.  If $T$ is
not a path, then it is a subdivided claw centered at a vertex $u$ of
$G$.  We obtain again an induced subgraph of $G$ isomorphic to $H$ by
adding to $T$ sufficiently many vertices of $\{x_0, \dots x_{k_1+1},
y_0, \dots, y_{k_2+k_3}\}$.
\end{proof}

\subsection{NP-complete cases (unbounded degree)}

Many NP-completeness results can be proved by adapting Bienstock's
construction.  We give here several polynomial reductions from the
problem {\sc $3$-Satisfiability} of Boolean functions.  These results
are given in a framework that involves a few parameters, so that our
result can possibly be used for different problems of the same type.
Recall that a Boolean function with $n$ variables is a mapping $f$
from $\{0, 1\}^n$ to $\{0, 1\}$.  A Boolean vector $\xi\in\{0, 1\}^n$
is a \emph{truth assignment satisfying $f$} if $f(\xi)=1$.  For any
Boolean variable $z$ on $\{0, 1\}$, we write $\overline{z}:=1-z$, and
each of $z, \overline{z}$ is called a \emph{literal}.  An instance of
{\sc $3$-Satisfiability} is a Boolean function $f$ given as a product
of clauses, each clause being the Boolean sum $\vee$ of three
literals; the question is whether $f$ is satisfied by a truth
assignment.  The NP-completeness of {\sc $3$-Satisfiability} is a
fundamental result in complexity theory, see \cite{garey.johnson:np}.

Let $f$ be an instance of {\sc $3$-Satisfiability}, consisting of $m$
clauses $C_1, \ldots, C_m$ on $n$ variables $z_1, \ldots, z_n$.  For
every integer $k\geq 3$ and parameters $\alpha \in \{1, 2\}$, $\beta
\in \{0, 1\}$, $\gamma \in \{0, 1\}$, $\delta \in \{0, 1, 2, 3\}$,
$\varepsilon \in \{0, 1\}$, $\zeta \in \{0, 1\}$ such that if $\alpha
= 2$ then $\varepsilon = \beta = \gamma$, let us build a graph $G_f(k,
\alpha, \beta, \gamma, \delta, \varepsilon, \zeta)$ with two
specified vertices $x,y$ of degree~2.  There will be a hole
containing $x$ and $y$ in $G_f(k, \alpha, \beta, \gamma, \delta,
\varepsilon, \zeta)$ if and only if there exists a truth assignment
satisfying $f$.  In $G_f(k, \alpha, \beta, \gamma, \delta,
\varepsilon, \zeta)$ (we will sometimes write $G_f$ for short), there
will be two kinds of edges: blue and red.  The reason for this
distinction will appear later.  Let us now describe $G_f$.

\subsubsection*{Pieces of $G_f$ arising from variables}

For each variable $z_i$ ($i=1, \ldots, n$), prepare a graph $G(z_i)$
with $4k$ vertices $a_{i,r},$ $b_{i, r},$ $a'_{i, r},$ $b'_{i, r}$,
$r\in\{1, \dots, k\}$ and $4(m+2)2k$ vertices $t_{i, 2pk+r},$ $f_{i,
2pk+r},$ $t'_{i, 2pk+r},$ $f'_{i, 2pk+r}$, $p \in \{0, \dots, m+1\}$,
$r \in \{0, \ldots, 2k-1\}$.  Add blue edges so that the four sets
$\{a_{i,1}, \dots a_{i,k},$ $t_{i, 0}, \ldots,$ $t_{i, 2k(m+2)-1},$
$b_{i, 1}, \dots, b_{i, k}\}$, $\{a_{i,1}, \dots a_{i,k},$ $f_{i, 0},
\ldots, f_{i, 2k(m+2)-1}, b_{i, 1}, \dots, b_{i, k}\}$, $\{a'_{i,1},
\dots a'_{i,k},$ $t'_{i, 0}, \ldots,$ $t'_{i, 2k(m+2)-1},$ $b'_{i, 1},
\dots, b'_{i, k}\}$, $\{a'_{i,1}, \dots a'_{i,k},$ $f'_{i, 0}, \ldots,
f'_{i, 2k(m+2)-1}, b'_{i, 1}, \dots, b'_{i, k}\}$ all induce paths
(and the vertices appear in this order along these paths).  See
Figure~\ref{fig.reco.1}.

\vspace{2ex}

\noindent Add red edges according to the value of $\alpha$, $\beta$,
$\gamma$, as follows:

\begin{itemize}
\item
  If $\alpha = 1$ then, for every $p=1, \dots, m+1$, add all edges
  between $\{t_{i, 2kp}, t_{i, 2kp+\beta}\}$ and $\{f_{i, 2kp}, f_{i,
    2kp+\gamma}\}$, between $\{f_{i, 2kp}, f_{i, 2kp+\gamma}\}$ and
  $\{t'_{i, 2kp}, t'_{i, 2kp+\beta}\}$, between $\{t'_{i, 2kp}, t'_{i,
    2kp+\beta}\}$ and $\{f'_{i, 2kp}, f'_{i, 2kp+\gamma}\}$, between
  $\{f'_{i, 2kp}, f'_{i, 2kp+\gamma}\}$ and $\{t_{i, 2kp}, t_{i,
    2kp+\beta}\}$.

\item
  If $\alpha = 2$ then, for every $p=1, \dots, m$, add all edges
  between $\{t_{i, 2kp + k-1}, t_{i, 2kp + k-1 + \beta}\}$ and
  $\{f_{i, 2kp + k-1}, f_{i, 2kp + k-1 + \gamma}\}$ ; for every $p=1,
  \dots, m+1$, add all edges between $\{f_{i, 2kp + k-1}, f_{i, 2kp +
    k-1 + \gamma}\}$ and $\{t'_{i, 2kp}, t'_{i, 2kp + \beta}\}$,
  between $\{t'_{i, 2kp}, t'_{i, 2kp+\beta}\}$ and $\{f'_{i, 2kp},
  f'_{i, 2kp+\gamma}\}$, between $\{f'_{i, 2kp}, f'_{i, 2kp+\gamma}\}$
  and $\{t_{i, 2k(p-1) + k-1}, t_{i, 2k(p-1) + k-1 + \beta}\}$.  

  See Figures~\ref{fig.reco.2}, \ref{fig.reco.3}.
\end{itemize}

\begin{figure}[p]
\center
\includegraphics{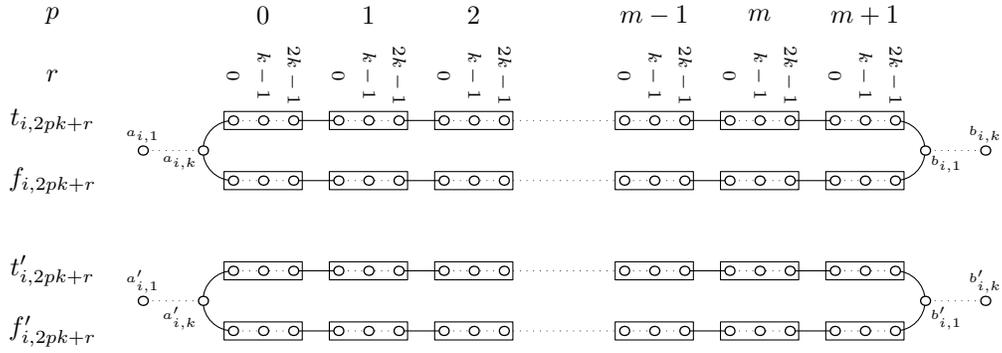}
\caption{The graph $G(z_i)$ (only blue edges are depicted)\label{fig.reco.1}}
\end{figure}

\begin{figure}[p]
\center
\includegraphics{bienstock.2}
\caption{The graph $G(z_i)$ when $\alpha = 1$, $\beta = 0$, $\gamma =
  0$\label{fig.reco.2}}
\end{figure}

\begin{figure}[p]
\center
\includegraphics{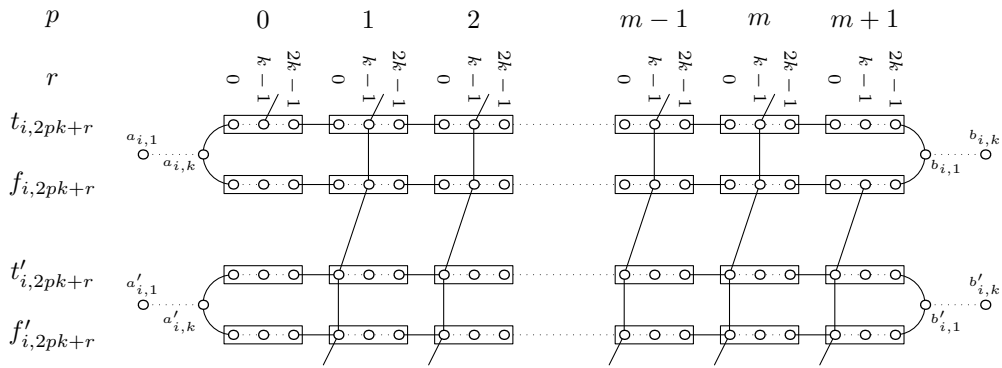}
\caption{The graph $G(z_i)$ when $\alpha = 2$, $\beta = 0$, $\gamma =
  0$\label{fig.reco.3}}
\end{figure}

\subsubsection*{Pieces of $G_f$ arising from clauses}

For each clause $C_j$ ($j=1, \ldots, m$), with $C_j = y_j^1\vee y_j^2
\vee y_j^3$, where each $y_j^q$ ($q=1, 2, 3$) is a literal from
$\{z_1,$ $\ldots,$ $z_n,$ $\overline{z}_1,$ $\ldots,$
$\overline{z}_n\}$, prepare a graph $G(C_j)$ with $2k$ vertices $c_{j,
  p}$, $d_{j, p}$, $p\in \{1, \dots, k\}$ and $6k$ vertices $u^q_{j,
  p}$, $q\in \{1, 2, 3\}$, $p\in \{1, \dots, 2k\}$.  Add blue edges so
that the three sets $\{c_{j, 1}, \dots, c_{j, k}, u^q_{j, 1}, \dots,
u^q_{j, 2k}, d_{j, 1}, \dots d_{j, k}\}$, $q\in \{1, 2, 3\}$ all
induce paths (and the vertices appear in this order along these
paths).

\vspace{2ex}

\noindent Add red edges according to the value of $\delta$:

\begin{itemize}
\item
  If
$\delta=0$, add no edge.  
\item
  If $\delta=1$, add $u^1_{j, 1}u^2_{j, 1}$,
$u^1_{j,2k}u^2_{j, 2k}$.  
\item
  If $\delta=2$, add $u^1_{j, 1}u^2_{j, 1}$, $u^1_{j,2k}u^2_{j, 2k}$,
  $u^1_{j, 1}u^3_{j, 1}$, $u^1_{j,2k}u^3_{j, 2k}$.  
\item
  If $\delta=3$, add $u^1_{j, 1}u^2_{j, 1}$, $u^1_{j,2k}u^2_{j, 2k}$,
  $u^1_{j, 1}u^3_{j, 1}$, $u^1_{j,2k}u^3_{j, 2k}$, $u^2_{j, 1}u^3_{j,
    1}$, $u^2_{j,2k}u^3_{j, 2k}$.  
\end{itemize}

\noindent See Figure~\ref{fig.reco.4}.

\begin{figure}[p]
\center
\includegraphics{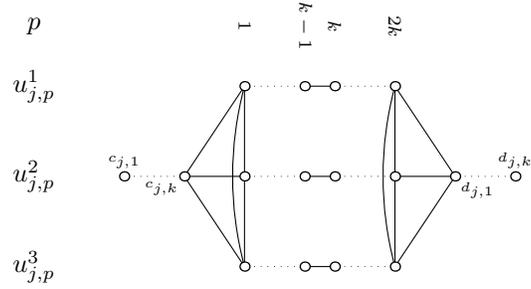}
\caption{The graph $G(C_j)$ when $\delta = 3$\label{fig.reco.4}}
\end{figure}

\begin{figure}[p]
\center
\includegraphics{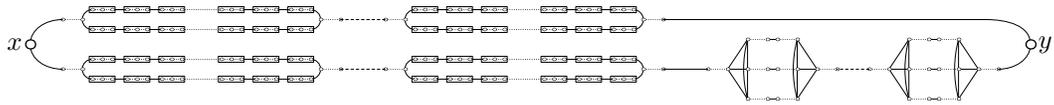}
\caption{The whole graph $G_f$\label{fig.reco.6}}
\end{figure}

\begin{figure}[p]
\center
\includegraphics{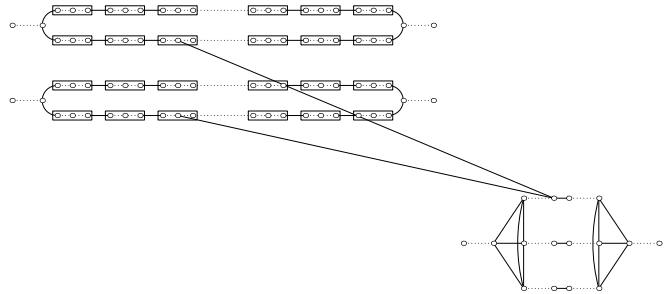}
\caption{Red edges between $G(z_i)$ and $G(C_j)$ when $\varepsilon =
  \zeta = 0$\label{fig.reco.5}}
\end{figure}

\subsubsection*{Gluing the pieces of $G_f$}

The graph $G_f(k, \alpha, \beta, \gamma, \delta, \varepsilon, \zeta)$
is obtained from the disjoint union of the $G(z_i)$'s and the
$G(C_j)$'s as follows.  For $i = 1, \ldots, n-1$, add blue edges
$b_{i, k}a_{i+1, 1}$ and $b'_{i, k}a'_{i+1, 1}$.  Add a blue edge
$b'_{n, k}c_{1, 1}$.  For $j=1, \ldots, m-1$, add a blue edge $d_{j,
k}c_{j+1, 1}$.  Introduce the two special vertices $x,y$ and add blue
edges $xa_{1, 1}, xa'_{1, 1}$ and $yd_{m, k}, yb_{n, k}$.  See
Figure~\ref{fig.reco.6}.

Add red edges according to $f, \varepsilon, \zeta$.  For $q=1, 2, 3$,
if $y_j^q = z_i$, then add all possible edges between $\{f_{i, 2kj + k
- 1}, f_{i, 2kj + k - 1 + \varepsilon} \}$ and $\{u_{j, k}^q, u_{j,
k+\zeta}^q\}$ and between $\{f'_{i, 2kj + k - 1}, f'_{i, 2kj + k - 1 +
\varepsilon} \}$ and $\{u_{j, k}^q, u_{j, k+\zeta}^q\}$;  if
$y_j^q=\overline{z}_i$ then add all possible edges between $\{t_{i,
2kj + k - 1}, t_{i, 2kj + k - 1 + \varepsilon} \}$ and $\{u_{j, k}^q,
u_{j, k+\zeta}^q\}$ and between $\{t'_{i, 2kj + k- 1}, t'_{i, 2kj + k
- 1 + \varepsilon} \}$ and $\{u_{j, k}^q, u_{j, k+\zeta}^q\}$.  See
Figure~\ref{fig.reco.5}.

Clearly the size of $G_f(k, \alpha, \beta, \gamma, \delta,
\varepsilon, \zeta)$ is polynomial (actually quadratic) in the size
$n+m$ of $f$, and $x,y$ are non-adjacent and both have degree two.

\begin{lemma}
  \label{l:nplemma}
$f$ is satisfied by a truth assignment if and only if $G_f(k, \alpha,
  \beta, \gamma, \delta, \varepsilon, \zeta)$ contains a hole passing
  through $x, y$.
\end{lemma}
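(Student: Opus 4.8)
The plan is to establish both implications by setting up an explicit correspondence between the satisfying truth assignments of $f$ and a distinguished family of holes of $G_f$, which I will call \emph{canonical}. A canonical hole is the union of two induced $x$--$y$ paths: an \emph{unprimed branch} $x,a_{1,1},\dots,b_{n,k},y$ that runs through the unprimed part of each $G(z_i)$ in turn, and a \emph{primed branch} $x,a'_{1,1},\dots,b'_{n,k},c_{1,1},\dots,d_{m,k},y$ that runs through the primed part of each $G(z_i)$ and then through each $G(C_j)$. Inside $G(z_i)$ the unprimed branch uses either the $t$-path or the $f$-path (and the primed branch either the $t'$-path or the $f'$-path), and inside each $G(C_j)$ the primed branch uses exactly one of the three $u^q$-paths. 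The intended dictionary is: $z_i=1$ iff the unprimed branch uses the $t$-path, while the $u^q$-path chosen in $G(C_j)$ names a literal of $C_j$ declared true.

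For the easy implication, suppose $\xi$ satisfies $f$. I would first build the candidate hole: for each $i$ route the unprimed branch along the $t$-path and the primed branch along the $t'$-path when $\xi(z_i)=1$, and along the $f$- and $f'$-paths otherwise; for each $j$ select a literal $y_j^{q}$ with $\xi(y_j^{q})=1$ (one exists since $\xi$ satisfies $C_j$) and route the primed branch through the corresponding $u^{q}$-path. It then remains to verify that the resulting cycle is induced, i.e. has no blue or red chord. No blue chord occurs because the only blue edges are those along the four variable paths and the three clause paths, which meet solely at the junction vertices $a_{i,\cdot},b_{i,\cdot},c_{j,\cdot},d_{j,\cdot}$. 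For red chords there are three cases: the within-variable red edges (controlled by $\alpha,\beta,\gamma$) never have both ends in the hole because we chose $(t,t')$ or $(f,f')$ consistently; the within-clause red edges (controlled by $\delta$) never do because only one $u^q$-path is used; and a variable--clause red edge (controlled by $\varepsilon,\zeta$) joining the $f$-paths of $z_i$ to the $u^q$-path of $C_j$ with $y_j^q=z_i$ is a chord only if that $u^q$-path is used and the $f$-path of $z_i$ is used, i.e. only if we selected $y_j^q=z_i$ while $\xi(z_i)=0$, contradicting $\xi(y_j^q)=1$ (the case $y_j^q=\overline{z}_i$ is symmetric). Hence no chord arises and we obtain a hole through $x,y$.

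For the converse, let $C$ be a hole through $x,y$. Since $x,y$ have degree $2$, the edges $xa_{1,1},xa'_{1,1},yd_{m,k},yb_{n,k}$ all lie on $C$, so $C$ is the union of two induced $x$--$y$ paths. The heart of the proof is the structural claim that $C$ is \emph{canonical}, i.e. one branch runs through the unprimed variable parts in order and the other through the primed variable parts and then the clause parts in order, each branch following a single internal path inside every gadget it meets. Granting this, I would set $\xi(z_i)=1$ iff the unprimed branch uses the $t$-path; the within-variable red edges (between the $f$- and $t'$-vertices, and between the $f'$- and $t$-vertices) would be chords unless the primed branch simultaneously uses the $t'$-path, so the reading is consistent across the two branches. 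For each clause $C_j$ the primed branch follows exactly one $u^q$-path; if $y_j^q=z_i$ then a red edge joins that $u^q$-path to the $f$-path of $z_i$, which would be a chord unless $z_i$ lies on its $t$-path, i.e. unless $\xi(z_i)=1$, whence $\xi(y_j^q)=1$ (the case $y_j^q=\overline{z}_i$ is symmetric). Thus the literal selected in each clause is true and $\xi$ satisfies $f$.

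The main obstacle is the canonicity claim. Note that the red variable--clause edges create edges between the variable part and the clause part, so the series structure cannot be obtained merely by exhibiting cut edges; instead one must show that no red edge of $G_f$ can serve as an edge of $C$, and hence that $C$ lives in the blue subgraph, where it is a union of theta-like gadgets glued in a path and the canonical form is forced. This is precisely where the otherwise unused lengths are spent: the blocks of length $2k$ and the order-$k$ segments $a_{i,\cdot},b_{i,\cdot},c_{j,\cdot},d_{j,\cdot}$, together with the offsets $\beta,\gamma,\varepsilon,\zeta$, guarantee that the red edges sit at sparse, mutually aligned positions and are pairwise non-adjacent, so that traversing any one of them — or switching between the two internal paths of a gadget — forces a chord, contradicting that $C$ is induced. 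The remaining bookkeeping (that distinct clauses act on disjoint windows $2kj+k-1$ of each variable gadget, so that constraints from different clauses do not interfere, and that the sub-case $\alpha=2$ differs from $\alpha=1$ only in where the consistency edges are attached) I expect to be routine once the block structure is recorded.
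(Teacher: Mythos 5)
Your overall architecture matches the paper's: the forward direction (building an explicit chordless cycle from a satisfying assignment and checking that the variable--clause red edges cannot be chords because a selected $u^q$-path never attaches to a selected $t$/$f$-path) is correct and is essentially the paper's argument, and your recipe for reading an assignment off a canonical hole coincides with the paper's final paragraph. The genuine gap is in the converse: the canonicity claim, which you yourself call the ``heart of the proof'' and then defer as bookkeeping, is where virtually all of the paper's work lies, and it is not routine. The paper devotes two claims to it: starting from the degree-$2$ vertex $x$, it shows that exactly one of $t_{1,0},f_{1,0}$ and exactly one of $t'_{1,0},f'_{1,0}$ lie on the hole $Z$, that mixed choices such as $\{t_{1,0},f'_{1,0}\}$ are impossible, and then takes the largest $p$ such that a consistent prefix $t_{1,0},\dots,t_{1,p},t'_{1,0},\dots,t'_{1,p}$ (or the $f$-version) lies on $Z$ and derives a contradiction from the first red edge $Z$ could use --- an explicit case analysis (six cases when $\alpha=2$, because the red attachments sit at the two position types $2kl$ and $2kl+k-1$). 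Only after that do the $b$-vertices, and then the clause gadgets (Claim 2, which again needs an argument that the hole cannot jump from a $u^q$-path into a variable gadget), fall into place. Nothing in your sketch substitutes for this induction from $x$ and $y$.

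Moreover, the heuristic you offer in place of that induction is false. You assert that the red edges ``sit at sparse, mutually aligned positions and are pairwise non-adjacent,'' so that using one, or switching paths inside a gadget, visibly forces a chord. In fact, with $\alpha=1$ and $\beta=\gamma=0$ the four red edges at position $2kp$ form a $4$-cycle on $t_{i,2kp}, f_{i,2kp}, t'_{i,2kp}, f'_{i,2kp}$: they pairwise share endpoints, and this adjacency of red edges is precisely what the paper exploits to generate contradictions (a vertex of $Z$ acquiring a third neighbour on $Z$, or $Z$ being unable to reach $y$). So the ``sparse non-adjacent red edges'' picture cannot be turned into a routine verification; the case analysis locating where the first red edge can attach, for each value of $\alpha$, is unavoidable and is absent from your proposal.
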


\begin{proof}
Recall that if $\alpha = 2$ then $\varepsilon = \beta = \gamma$.  We
will prove the lemma for $\beta=0$, $\gamma=0$, $\varepsilon=0$,
$\zeta=0$ because the proof is essentially the same for the other
possible values.

Suppose that $f$ is satisfied by a truth assignment $\xi \in \{0,
1\}^n$.  We can build a hole in $G$ by selecting vertices as follows.
Select $x,y$.  For $i=1, \ldots, n$, select $a_{i, p}, b_{i, p},
a'_{i, p}, b'_{i, p}$ for all $p \in \{1, \dots, k\}$.  For $j=1,
\dots, m$, select $c_{j, p}, d_{j, p}$ for all $p\in \{1, \dots, k\}$.
If $\xi_i = 1$ select $t_{i, p}, t'_{i, p}$ for all $p \in \{0,
\ldots, 2k(m+2) - 1\}$.  If $\xi_i=0$ select $f_{i, p}, f'_{i, p}$ for
all $p \in \{0, \ldots, 2k(m+2)-1\}$.  For $j = 1, \ldots, m$, since
$\xi$ is a truth assignment satisfying $f$, at least one of the three
literals of $C_j$ is equal to $1$, say $y_j^q=1$ for some $q\in\{1, 2,
3\}$.  Then select $u_{j, p}^q$ for all $p\in \{1, \dots, 2k\}$.  Now
it is a routine matter to check that the selected vertices induce a
cycle $Z$ that contains $x,y$, and that $Z$ is chordless, so it is a
hole.  The main point is that there is no chord in $Z$ between some
subgraph $G(C_j)$ and some subgraph $G(z_i)$, for that would be either
an edge $t_{i, p}u_{j, r}^q$ with $y_j^q=z_i$ and $\xi_i=1$, or,
symmetrically, an edge $f_{i, p}u_{j, r}^q$ with $y_j^q =
\overline{z}_i$ and $\xi_i = 0$, and in either case this would
contradict the way the vertices of $Z$ were selected.

Conversely, suppose that $G_f(k, \alpha, \beta, \gamma, \delta,
\varepsilon, \zeta)$ admits a hole $Z$ that contains $x,y$.

\begin{claim}\label{clm:zgxi}
For $i=1, \ldots, n$, $Z$ contains at least $4k+4k(m+2)$ vertices of
$G(z_i)$: $4k$ of these are $a_{i, p}, a'_{i, p}, b_{i,p}, b'_{i, p}$
where $p\in \{1, \dots, k\}$, and the others are either the $t_{i, p},
t'_{i, p}$'s or the $f_{i, p}, f'_{i,p}$'s where $ p \in \{0, \ldots,
2k(m+2)-1\}$.
\end{claim}

\begin{proofclaim} 
Let us first deal with the case $i=1$.  Since $x\in Z$ has degree $2$,
$Z$ contains $a_{1, 1}, \dots, a_{1, k}$ and $a'_{1, 1}, \dots, a'_{1,
k}$.  Hence exactly one of $t_{1, 0}, f_{1, 0}$ is in $Z$.  Likewise
exactly one of $t'_{1, 0}, f'_{1, 0}$ is in $Z$.  If $t_{1, 0}, f'_{1,
0}$ are both in $Z$ then there is a contradiction: indeed, if
$\alpha=1$ then, $t_{1, 0}, \dots, t_{1, 2k}$ and $f'_{1, 0}, \dots,
f'_{1, 2k}$ must all be in $Z$, and since $t_{1, 2k}$ sees $f'_{1,
2k}$, $Z$ cannot go through $y$; and if $\alpha =2$ the proof is
similar.  Similarly, $t'_{1, 0}, f_{1, 0}$ cannot both be in $Z$.  So,
there exists a largest integer $p \leq 2k(m+2)-1$ such that either
$t_{1, 0}, \dots, t_{1, p}$ and $t'_{1, 0}, \dots, t'_{1, p}$ are all
in $Z$ or $f_{1, 0}, \dots, f_{1, p}$ and $f'_{1, 0}, \dots, f'_{1,
p}$ are all in $Z$.

We claim that $p = 2k(m+2) - 1$.  For otherwise, some vertex $w$ in
$\{t_{1, p}$, $t'_{1, p}$, $f_{1, p}$, $f'_{1, p}\}$ is incident to a
red edge $e$ of $Z$.  If $\alpha = 1$  then, up to a symmetry,
we assume that $t_{1, 0}, \dots, t_{1, p}$ and $t'_{1, 0}, \dots,
t'_{1, p}$ are all in $Z$.  Let $w'$ be the vertex of $e$ that is not
$w$.  Then $w'$ (which is either an $f_{1, \cdot}$, an $f'_{1, \cdot}$
or a $u^\cdot_{j, \cdot}$) is a neighbor of both $t_{1, p}$, $t'_{1,
p}$.  Hence, $Z$ cannot go through $y$, a contradiction.  This proves
our claim when $\alpha = 1$.  If $\alpha=2$, we distinguish
between the following six cases.  \\
{\it Case~1: $p = k-1$.} Then $e = t_{1, k-1} f'_{1, 2k}$.  Clearly
$t_{1, 0}, \dots, t_{1, k-1}$ must all be in $Z$.  If $t'_{1, 0},
\dots, t'_{1, 2k}$ are in $Z$, there is a contradiction because of
$t'_{1, 2k} f'_{1,2k}$, and if $f'_{1, 0}, \dots, f'_{1, 2k}$ are in
$Z$, there is a contradiction because of $e$.  \\
{\it Case~2: $p = 2kl$ where $1\leq l \leq m+1$ and $w = t'_{1,
2kl}$.} Then $e$ is $t'_{1, 2kl} f_{1, 2kl + k -1}$ or $t'_{1, 2kl}
f'_{1, 2kl}$.  In either case, $t_{1, 2kl}, \dots, t_{1, 2kl+k-1}$ are
all in $Z$, and there is a contradiction because of the red edge
$f_{1, 2kl + k - 1} t_{1, 2kl + k -1}$ or $t_{1, 2(l-1)k + k - 1}
f'_{1, 2kl}$, or when $l=m+1$ because of $b_{1, 1}$.  \\
{\it Case~3: $p = 2kl$ where $1\leq l \leq m+1$ and $w = f'_{1,
2kl}$.} Then $e$ is $f'_{1, 2kl} t_{1, 2(l-1)k + k - 1}$ or $t'_{1,
2kl} f'_{1, 2kl}$.  In either case, $f_{1, 2kl}, \dots, f_{1,
2kl+k-1}$ are all in $Z$, and there is a contradiction because of the
red edge $t_{1, 2(l-1)k + k - 1} f_{1, 2(l-1)k + k -1}$ or $t'_{1,
2kl} f_{1, 2kl+k-1}$, or when $l=1$ because of $a_{1, k}$.  \\
{\it Case~4: $p = 2kl + k - 1$ where $1 \leq l \leq m$ and $w = t_{1,
2kl + k - 1}$.} Then $e$ is $t_{1, 2kl+k-1} f_{1, 2kl + k - 1}$,
$t_{1, 2kl + k - 1} f'_{1, 2(l+1)k}$, or $t_{1, 2kl+k-1} u_{j, k}^q$
for some $j, q$.  In the last case, there is a contradiction since
$t'_{1, 2kl + k - 1}\in Z$ also sees $u_{j,k}^q$.  For the same
reason, $t'_{1, 2kl + k - 1} u_{j,k}^q$ is not an edge of $Z$ and
$t'_{1, 2kl+k-1}, \dots, t'_{1, 2(l+1)k}$ are all in $Z$.  So there is
a contradiction because of the red edge $t'_{1, 2kl} f_{1, 2kl+k-1}$
or $t'_{1, 2(l+1)k} f'_{1, 2(l+1)k}$.  \\
{\it Case~5: $p = 2kl + k - 1$ where $2 \leq l \leq m$ and $w = f_{1,
2kl + k - 1}$.} Then $e$ is either $f_{1, 2kl+k-1} t_{1, 2kl + k - 1}$ or
$f_{1, 2kl + k - 1} t'_{1, 2kl}$ or $f_{1, 2kl+k-1} u_{j, k}^q$ for
some $j, q$.  In the last case, there is a contradiction since $f'_{1,
2kl + k - 1}\in Z$ also sees $u_{j,k}^q$.  For the same reason,
$f'_{1, 2kl + k - 1} u_{j,k}^q$ is not an edge of $Z$ and $f'_{1,
2kl+k-1}, \dots, f'_{1, 2(l+1)k}$ are all in $Z$.  So there is a
contradiction because of the red edge $t'_{1, 2kl} f'_{1, 2kl}$ or
$t_{1, 2kl+k-1} f'_{1, 2(l+1)}$.  \\
{\it Case~6: $p=2k(m+1) + k -1$ and $w = f_{1, 2k(m+1) + k -1}$.} Then
there is a contradiction because of the red edge $t'_{1,
2k(m+1)}f'_{1, 2k(m+1)}$.  This proves our claim.

Since $p = 2k(m+2) - 1$, $b_{1, 1}$ is in $Z$.  We claim that $b_{1,
  2}$ is in $Z$.  For otherwise, the two neighbors of $b_{1, 1}$ in
$Z$ are $t_{1, 2k(m+2) - 1}$ and $f_{1, 2k(m+2) - 1}$.  This is a
contradiction because of the red edges $t_{1, 2km + k - 1} f'_{1,
  2k(m+1)}$, $t'_{1, 2k(m+1)} f_{1, 2k(m+1) + k - 1}$ (if $\alpha =
2$) or $t_{1, 2k(m+1)} f'_{1, 2k(m+1)}$, $t'_{1, 2k(m+1)} f_{1,
  2k(m+1)}$ (if $\alpha = 1$).  Similarly, $b'_{1, 1}, b'_{1, 2}$ are
in $Z$.  So $b_{1, 1}, \dots, b_{1, k}$ and $b'_{1, 1}, \dots, b'_{1,
  k}$ are all in $Z$.

This proves~(\ref{clm:zgxi}) for $i=1$.  The proof for $i=2, ..., n$ is
essentially the same as for $i=1$.
\end{proofclaim}

\begin{claim}\label{clm:zgcj}
For $j=1, \ldots, m$, $Z$ contains $c_{j, 1}, \dots, c_{j, k}$, $d_{j,
1}, \dots, d_{j, k}$ and exactly one of $\{u_{j, 1}^1, \dots, u_{j,
2k}^1\}$, $\{u_{j, 1}^2, \dots, u_{j, 2k}^2\}$, $\{u_{j, 1}^3, \dots,
u_{j, 2k}^3\}$.
\end{claim}

\begin{proofclaim} 
Let us first deal with the case $j=1$.  By~(\ref{clm:zgxi}), $b'_{n,
  k}$ is in $Z$ and so $c_{1, 1}, \dots, c_{1, k}$ are all in $Z$.
Consequently exactly one of $u_{1, 1}^1, u_{1, 1}^2, u_{1, 1}^3$ is in
$Z$, say $u_{1, 1}^1$ up to a symmetry.  Note that the neighbour of
$u_1^1$ in $Z\setminus c_{1, k}$ cannot be a vertex among $u_{1, 1}^2,
u_{1, 1}^3$ for this would imply that $Z$ contains a triangle.  Hence
$u_{1, 2}^1, \dots, u^1_{1, k}$ are all in $Z$.  The neighbour of
$u_{1, k}^1$ in $Z\setminus u^1_{1, k-1}$ cannot be in some $G(z_i)$
($1\le i\le n$).  Else, up to a symmetry we assume that this neighbor
is $t_{1, p}$, $p\in\{0, \dots, 2k(m+2)-1\}$.  If $t_{1, p}\in Z$,
there is a contradiction because then $t'_{1, p}$ is also in $Z$
by~(\ref{clm:zgxi}) and $t'_{1, p}$ would be a third neighbour of
$u_{1, k}^1$ in $Z$.  If $t_{1, p}\notin Z$, there is a contradiction
because then the neighbor of $t_{1, p}$ in $Z\setminus u_{1, k}^1$
must be $t_{1, p+1}$ (or symmetrically $t_{1, p-1}$) for otherwise $Z$
contains a triangle.  So, $t_{1, p+1}, t_{1, p+2}, \dots$ must be in
$Z$, till reaching a vertex having a neighbor $f_{1, p'}$ or $f'_{1,
  p'}$ in $Z$ (whatever $\alpha$).  Thus the neighbour of $u_{1, k}^1$
in $Z\setminus u_{1, k-1}^1$ is $u_{1, k+1}^1$.  Similarly, we prove
that $u_{1, k+2}, \dots, u_{1, 2k}$ are in $Z$, that $d_{1, 1}, \dots,
d_{1, k}$ are in $Z$, and so the claim holds for $j=1$.  The proof of
the claim for $j=2, ..., m$ is essentially the same.
\end{proofclaim}

Together with $x, y$, the vertices of $Z$ found in~(\ref{clm:zgxi})
and~(\ref{clm:zgcj}) actually induce a cycle.  So, since $Z$ is a
hole, they are the members of $Z$ and we can replace ``at least'' by
``exactly'' in~(\ref{clm:zgxi}).  We can now make a Boolean vector
$\xi$ as follows.  For $i=1, \ldots, n$, if $Z$ contains $t_{i, 0},
t'_{i, 0}$ set $\xi_i = 1$; if $Z$ contains $f_{i, 0}, f'_{i, 0}$ set
$\xi_i = 0$.  By~(\ref{clm:zgxi}) this is consistent.  Consider any
clause $C_j$ ($1\le j\le m$).  By~(\ref{clm:zgcj}) and up to symmetry
we may assume that $u_{j, k}^1$ is in $Z$.  If $y_j^1 = z_i$ for some
$i\in\{1, .., n\}$, then the construction of $G$ implies that $f_{i,
2kj+k-1}, f'_{i, 2j+k-1}$ are not in $Z$, so $t_{i, 2kj+k-1}, t'_{i,
2kj+k-1}$ are in $Z$, so $\xi_i=1$, so clause $C_j$ is satisfied by
$x_i$.  If $y_j^1 = \overline{z}_i$ for some $i\in\{1, \ldots, n\}$,
then the construction of $G_f$ implies that $t_{i, 2kj+k-1}, t'_{i,
2kj+k-1}$ are not in $Z$, so $f_{i, 2kj+k-1}, f'_{i, 2kj+k-1}$ are in
$Z$, so $\xi_i=0$, so clause $C_j$ is satisfied by $\overline{z}_i$.
Thus $\xi$ is a truth assignment satisfying $f$.
\end{proof}

\begin{theorem}
  \label{th:bienstock}
  Let $k\geq 5$ be an integer.  Then $\Gamma_{\{C_3, \dots, C_k, K_{1,
      6}\}}$ and $\Gamma_{\{I_1, \dots, I_k, C_5, \dots, C_k, K_{1,
      4}\}}$ are NP-complete.
 \end{theorem}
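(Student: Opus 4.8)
The plan is to show both problems are in NP and NP-hard. Membership in NP is immediate, since a hole through $x$ and $y$ is a polynomially checkable certificate. For the hardness I would reuse the reduction already packaged in Lemma~\ref{l:nplemma}: given a {\sc $3$-Satisfiability} instance $f$, build $G_f(k,\alpha,\beta,\gamma,\delta,\varepsilon,\zeta)$ for a suitable choice of the parameters, so that by Lemma~\ref{l:nplemma} $f$ is satisfiable if and only if $G_f$ has a hole through $x,y$. The construction is polynomial and $x,y$ are already non-adjacent of degree~$2$, so for each of the two target sets $\cal S$ the only thing left to verify is that the chosen $G_f$ contains no induced subgraph from $\cal S$; this is exactly what makes $(G_f,x,y)$ a legal instance of $\Gamma_{\cal S}$, and then the equivalence of Lemma~\ref{l:nplemma} finishes the reduction (the hole it produces runs around the whole backbone, so it is long and does not itself violate any cycle-exclusion).

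For $\Gamma_{\{C_3,\dots,C_k,K_{1,6}\}}$ I would take $\alpha=2$ and $\beta=\gamma=\delta=\varepsilon=\zeta=0$ (a legal choice, since $\alpha=2$ only requires $\varepsilon=\beta=\gamma$). This keeps $G_f$ sparse: a routine check over the vertex types shows that every vertex has degree at most~$5$ (using that we may assume each clause has three distinct literals), so $G_f$ has no $K_{1,6}$ at all. For the cycles I would argue that with $\alpha=2$ the red edges meet the four blue paths only at staggered positions that are $\ge 2k$ apart along each path, while the clause attachments join two points whose only short connection has length $\ge k+1$; hence every cycle of $G_f$ has length $>k$, and none of $C_3,\dots,C_k$ occurs.

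For $\Gamma_{\{I_1,\dots,I_k,C_5,\dots,C_k,K_{1,4}\}}$ I would instead take $\alpha=1$ together with the \emph{thickening} parameters $\beta=\gamma=\varepsilon=\zeta=1$ and $\delta=3$. Now each variable crossing becomes a $K_4$ and each clause junction becomes a triangle, which has three effects. First, every vertex neighbourhood has independence number at most~$3$ (the thickening pulls each vertex's off-path neighbours into adjacent pairs or into a clique; in particular $\delta=3$ makes the three $u$-branches a triangle, so the junction $c_{j,k}$ no longer carries a $K_{1,4}$), hence there is no $K_{1,4}$. Second, the local gadgets are dense, so the only induced cycles they host have length~$3$ or~$4$, which are allowed. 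Third, to exclude $C_5,\dots,C_k$ and $I_1,\dots,I_k$ I would use the periodicity of the construction: along the blue paths the crossings are spaced by a full period (about $2k$), so an induced cycle either stays inside one dense gadget (length at most~$4$) or must cross a blue-path stretch of length $\ge k-1$ twice, hence has length $>k$; and an induced tree $I_j$ would need its two degree-$3$ branch vertices either inside a single gadget, which is impossible because a dense gadget cannot host a tree with two \emph{independent} pendants, or more than $k$ steps apart along the backbone.

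The main obstacle is this last point, the $I_j$-freeness in Case~2, and it is precisely where the thickening parameters earn their keep. The verification is a vertex-type-by-vertex-type, direction-by-direction check that a would-be branch vertex either cannot supply two independent pendants ---because $\beta=\gamma=\varepsilon=\zeta=1$ forces its relevant neighbours into a clique, or makes them shared with, hence adjacent to, its spine neighbour--- or else the nearest candidate branch vertex in that direction is more than $k$ steps away. The delicate cases are the short stretches, such as the length-$(k-1)$ piece of an $f$-path between a variable crossing and a clause crossing, which must be shown to be blocked at both ends (the variable end because its crossing neighbours are all adjacent to its spine neighbour, the clause end because its remaining neighbours form a triangle). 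Arranging the parameters so that this blocking happens everywhere, while simultaneously keeping $x,y$ of degree~$2$ and preserving the equivalence of Lemma~\ref{l:nplemma}, is the crux of the argument.
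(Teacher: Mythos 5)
Your proposal follows the paper's proof exactly: the paper also invokes Lemma~\ref{l:nplemma} with the parameter choices $G_f(k,2,0,0,0,0,0)$ for $\Gamma_{\{C_3,\dots,C_k,K_{1,6}\}}$ and $G_f(k,1,1,1,3,1,1)$ for $\Gamma_{\{I_1,\dots,I_k,C_5,\dots,C_k,K_{1,4}\}}$, and it dismisses the verification of the forbidden induced subgraphs as ``a routine matter''. Your analysis of the first case (maximum degree $5$, every cycle of length greater than $k$) is sound and matches the paper. The problem lies in the second case, precisely at the point you yourself identify as the crux: the blocking mechanisms you describe do not operate there, and the check fails.

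Concretely, suppose $y_j^q=z_i$ (a positive occurrence). With $\alpha=1$, $\beta=\gamma=\varepsilon=\zeta=1$, the crossing pair on the $f$-path of $G(z_i)$ is $\{f_{i,2kj},f_{i,2kj+1}\}$ and the clause-attachment pair is $\{f_{i,2kj+k-1},f_{i,2kj+k}\}$, so consider the induced path $f_{i,2kj+1}-f_{i,2kj+2}-\cdots-f_{i,2kj+k-1}-u^q_{j,k}$ of length $k-1$. At the end $f_{i,2kj+1}$ attach the pendants $t_{i,2kj}$ and $t'_{i,2kj}$: both are adjacent to $f_{i,2kj+1}$ by the pair-to-pair joins, they are non-adjacent to each other (there is no $T$--$T'$ join), and they are non-adjacent to every later path vertex because the crossing joins involve positions $2kj,2kj+1$ only. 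Your claim that the crossing neighbours ``are all adjacent to the spine neighbour'' is true for branching from $f_{i,2kj}$ towards $f_{i,2kj+1}$, but false for branching from $f_{i,2kj+1}$ away from the crossing, which is what happens here. At the other end $u^q_{j,k}$ attach the pendants $u^q_{j,k-1}$ and $f'_{i,2kj+k-1}$: the former has degree $2$ (the $\delta=3$ triangles sit at $u$-positions $1$ and $2k$, nowhere near position $k$, so they block nothing), and the latter is joined to the $u$-pair but to no $f$-vertex and to no vertex of the crossing. All remaining non-adjacencies check out, so these $k+4$ vertices induce an $I_{k-1}$, and $k-1\in\{1,\dots,k\}$. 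A symmetric configuration on the $t$-, $t'$-paths arises from any negative occurrence, so for every nontrivial formula $G_f(k,1,1,1,3,1,1)$ contains a forbidden $I_{k-1}$ and is not a legal instance of $\Gamma_{\{I_1,\dots,I_k,C_5,\dots,C_k,K_{1,4}\}}$; the reduction as you state it breaks down. To be fair, the paper's own proof buries this same difficulty under the words ``routine matter'', so your proposal is faithful to its approach; but the added detail you supply does not close the gap, and closing it seems to require changing the gadget geometry (spacing crossings and attachments more than $k$ apart, i.e., a larger period), not merely tuning the thickening parameters, since in the given construction two branch vertices carrying independent pendants sit only $k-1$ apart.
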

\begin{proof}
It is a routine matter to check that the graph $G_f(k, 2, 0, 0, 0, 0,
0)$ contains no $C_{l}$ ($3 \leq l \leq k$) and no $K_{1,6}$ (in fact
it has no vertex of degree at least $6$).  So Lemma~\ref{l:nplemma}
implies that $\Gamma_{\{C_3, \dots, C_k, K_{1, 6}\}}$ is NP-complete.

It is a routine matter to check that the graph $G_f(k, 1, 1, 1, 3, 1,
1)$ contains no $K_{1, 4}$, no $C_{l}$ ($5 \leq l \leq k$) and no
$I_{l'}$ ($1 \leq l' \leq k$).  So Lemma~\ref{l:nplemma} implies that
$\Gamma_{\{K_{1, 4}, C_5, \dots, C_k, I_1, \dots, I_k\}}$ is
NP-complete.
\end{proof}

\subsection{Complexity of $\Gamma_{\{H\}}$ when $H$ is a connected graph}

\begin{theorem}
  \label{th:class}
Let $H$ be a connected graph.  Then one of the following holds:
  \begin{itemize}
    \item 
$H$ is a path or a subdivided claw and $\Gamma_{\{H\}}$ is polynomial.
    \item 
$H$ contains one of $K_{1, 4}$, $I_k$ for some $k\geq 1$, or $C_l$ for
some $l\geq 3$ as an induced subgraph and $\Gamma_{\{H\}}$ is
NP-complete.
  \end{itemize}
\end{theorem}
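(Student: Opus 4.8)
The plan is to separate a purely structural statement about $H$ from the complexity arguments, and then feed the structure into the results already proved. The structural claim I would establish first is: for a connected graph $H$, either $H$ is a path or a subdivided claw, or $H$ contains one of $K_{1,4}$, some $I_k$ ($k\geq 1$), or some $C_l$ ($l\geq 3$) as an induced subgraph. To prove it, note first that if $H$ has a cycle, then a shortest cycle of $H$ is chordless, so $H$ contains an induced $C_l$. Otherwise $H$ is a tree. If $\Delta(H)\geq 4$, then a vertex of degree $4$ together with four of its neighbours induces a $K_{1,4}$, since neighbours of a vertex in a tree are pairwise non-adjacent. So assume $\Delta(H)\leq 3$. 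If $H$ has at most one vertex of degree $3$, then $H$ is a path (no branch vertex) or a subdivided claw (one branch vertex). If $H$ has at least two vertices of degree $3$, choose such a pair $a,b$ at minimum distance; then the (unique) $a$--$b$ path has all interior vertices of degree $2$, and taking the two neighbours of $a$ off this path and the two neighbours of $b$ off this path yields, together with the path, an induced $I_l$ with $l=d_H(a,b)\geq 1$ (there are no chords because $H$ is a tree). This proves the dichotomy.

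The first outcome of the dichotomy gives the polynomial case at once: if $H$ is a path or a subdivided claw, then $\Gamma_{\{H\}}$ is polynomial by Theorem~\ref{th:clawpoly}. For the second outcome I would use the following monotonicity principle: if $H'$ is an induced subgraph of $H$, then every $H'$-free graph is $H$-free, so each instance $(G,x,y)$ of $\Gamma_{\{H'\}}$ is an instance of $\Gamma_{\{H\}}$ with the same answer. Hence it suffices to exhibit, from a {\sc $3$-Satisfiability} instance $f$, an $H$-free graph in which detecting a hole through $x,y$ encodes the satisfiability of $f$; such a graph is exactly what Lemma~\ref{l:nplemma} supplies, provided the parameters are chosen so that the gadget avoids the relevant induced subgraph of $H$. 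Concretely, if $H$ contains $K_{1,4}$ or some $I_{k_0}$, I would take $G_f(k,1,1,1,3,1,1)$ with $k\geq\max(k_0,5)$, which by the verification in Theorem~\ref{th:bienstock} contains no $K_{1,4}$ and no $I_{l'}$ for $1\leq l'\leq k$; and if $H$ contains some induced cycle, letting $g$ be the length of a shortest one, I would take $G_f(k,2,0,0,0,0,0)$ with $k\geq\max(g,5)$, which contains no $C_l$ for $3\leq l\leq k$, in particular no induced $C_g$. In either case an induced copy of $H$ inside $G_f$ would force an induced copy of the forbidden subgraph ($K_{1,4}$, $I_{k_0}$, or $C_g$), which is ruled out; so $G_f$ is $H$-free. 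Since a hole is a polynomial-size certificate, $\Gamma_{\{H\}}$ lies in NP, and Lemma~\ref{l:nplemma} then makes it NP-complete.

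The conceptual content is entirely in the structural dichotomy together with the observation that $H$-freeness of the Bienstock-type gadget reduces to the gadget avoiding a single induced subgraph of $H$; no new reduction beyond Lemma~\ref{l:nplemma} and Theorem~\ref{th:bienstock} is required. The step I expect to need the most care is the subgraph-containment bookkeeping in the NP-complete cases, namely checking that the parameter choices genuinely force $G_f$ to avoid the specific $C_g$, $I_{k_0}$, or $K_{1,4}$ occurring inside $H$, and in the structural part verifying that the $I_l$ extracted from two nearest degree-$3$ vertices is indeed induced and of the prescribed shape.
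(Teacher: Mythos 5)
Your proposal is correct and takes essentially the same route as the paper: the same structural trichotomy (a shortest cycle gives an induced $C_l$; in a tree, a degree-$4$ vertex gives a $K_{1,4}$ and two nearest branch vertices give an induced $I_l$; otherwise a path or subdivided claw), with Theorem~\ref{th:clawpoly} handling the polynomial cases and the Bienstock-type constructions handling the NP-complete ones. The only difference is one of explicitness: where the paper simply cites Theorem~\ref{th:bienstock}, you unfold it via Lemma~\ref{l:nplemma}, spelling out the monotonicity principle and the parameter choices $k\geq\max(k_0,5)$ or $k\geq\max(g,5)$ that the paper leaves implicit.
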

\begin{proof}
If $H$ contains one of $K_{1, 4}$, $I_k$ for some $k\geq 1$, or $C_l$
for some $l\geq 3$ as an induced subgraph then $\Gamma_{\{H\}}$ is
NP-complete by Theorem~\ref{th:bienstock}.  Else, $H$ is a tree since
it contains no $C_l$, $l\geq 3$.  If $H$ has no vertex of degree at
least~$3$, then $H$ is a path and $\Gamma_{\{H\}}$ is polynomial by
Theorem~\ref{th:clawpoly}.  If $H$ has a single vertex of degree at
least~$3$, then this vertex has degree 3 because $H$ contains no
$K_{1, 4}$.  So, $H$ is a subdivided claw and $\Gamma_{\{H\}}$ is
polynomial by Theorem~\ref{th:clawpoly}.  If $H$ has at least two
vertices of degree at least~3 then $H$ contains an $I_l$, where $l$ is
the minimum length of a path of $H$ joining two such vertices.  This
is a contradiction.
\end{proof}

Interestingly, the following analogous result for finding maximum
stable sets in $H$-free graphs was proved by Alekseev:

\begin{theorem}[Alekseev, \cite{alekseev:83}]
  \label{th:alekseev}
Let $H$ be a connected graph that is not a path nor a subdivided claw.
Then the problem of finding a maximum stable set in $H$-free graphs is
NP-hard.
\end{theorem}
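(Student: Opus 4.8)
The plan is to give a polynomial reduction from the maximum stable set problem on graphs of maximum degree~$3$, which is well known to be NP-hard, to the same problem on $H$-free graphs. The engine of the reduction is edge subdivision, and the starting observation is the classification already carried out inside the proof of Theorem~\ref{th:class}: since $H$ is connected and is neither a path nor a subdivided claw, $H$ contains, as an induced subgraph, at least one of $K_{1,4}$, $C_l$ for some $l\ge 3$, or $I_k$ for some $k\ge 1$. It therefore suffices to produce, in polynomial time, a graph that avoids whichever of these three is an induced subgraph of $H$ and whose stable-set number is under control, because an $F$-free graph is automatically $H$-free whenever $F$ is an induced subgraph of $H$.

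For a graph $G$ and an integer $t\ge 1$, let $G^{(t)}$ be obtained by subdividing every edge of $G$ exactly $2t$ times, that is, by replacing each edge $uv$ by an induced path of length $2t+1$ between $u$ and $v$. First I would establish the Poljak-type identity $\alpha(G^{(t)})=\alpha(G)+t\,|E(G)|$, where $\alpha$ denotes the maximum size of a stable set. For the lower bound, extend a maximum stable set $S$ of $G$ by adding $t$ suitably chosen interior vertices on each subdivided path; this is possible precisely because $S$ is stable, so no edge has both of its endpoints in $S$. For the upper bound, take a maximum stable set $S'$ of $G^{(t)}$, note that each subdivided path contributes at most $t$ interior vertices, with a deficiency of one whenever both of its endpoints lie in $S'$, and then trade these deficiencies against the edges induced by $S'\cap V(G)$ to recover a stable set of $G$ of the required size. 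This identity makes $(G,s)\mapsto(G^{(t)},\,s+t\,|E(G)|)$ a valid reduction, and since $H$ is fixed the integer $t$ will be a constant, so $G^{(t)}$ has size linear in that of $G$.

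Next I would verify that, for $t$ chosen large enough in terms of $H$ (say $t\ge |V(H)|$), the graph $G^{(t)}$ is $H$-free whenever $\Delta(G)\le 3$. Three features do the work. Every original vertex keeps its degree and every subdivision vertex has degree~$2$, so $\Delta(G^{(t)})\le 3$ and $G^{(t)}$ is $K_{1,4}$-free. Every cycle of $G^{(t)}$ has length at least $3(2t+1)>|V(H)|$, so $G^{(t)}$ has no cycle on at most $|V(H)|$ vertices and in particular no induced $C_l$ with $l\le|V(H)|$. Finally, any two distinct original vertices are at distance at least $2t+1$ in $G^{(t)}$, which is the crux of the $I_k$ case: in an induced copy of $I_k$ the two branch vertices have degree~$3$, hence degree at least~$3$ in $G^{(t)}$, so they must be original vertices; yet in $I_k$ they are joined by an induced path of length $k\le |V(H)|<2t+1$, contradicting the distance bound. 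Thus $G^{(t)}$ is $I_k$-free for every $k\le |V(H)|$.

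Putting the pieces together, whichever of $K_{1,4}$, $C_l$, $I_k$ is an induced subgraph of $H$ is absent from $G^{(t)}$ (the relevant $l$ or $k$ being at most $|V(H)|$), so $G^{(t)}$ is $H$-free; combined with $\alpha(G^{(t)})=\alpha(G)+t\,|E(G)|$, this turns any instance of maximum stable set on graphs of maximum degree~$3$ into an equivalent instance on $H$-free graphs, establishing NP-hardness. The step I expect to require the most care is the $I_k$ case: one must check that the branch vertices of an induced $I_k$ are forced to be original vertices of $G$ and then exploit the distance bound, whereas the $K_{1,4}$ and $C_l$ cases fall out immediately from the degree and the girth of $G^{(t)}$. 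Pinning down the \emph{exact} value of $\alpha(G^{(t)})$, rather than settling for inequalities, is the other place where the short exchange argument must be run carefully.
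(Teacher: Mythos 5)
There is nothing in the paper to compare your argument against: Theorem~\ref{th:alekseev} is stated as a quoted result, attributed to Alekseev~\cite{alekseev:83}, and the paper gives no proof of it (it appears only as a point of comparison with Theorem~\ref{th:class}). Judged on its own, your proof is correct, and it is essentially the classical argument: the structural dichotomy you invoke (a connected $H$ that is neither a path nor a subdivided claw contains an induced $K_{1,4}$, $C_l$ with $l\le |V(H)|$, or $I_k$ with $k\le |V(H)|$) is exactly the case analysis inside the paper's proof of Theorem~\ref{th:class}, and the subdivision identity $\alpha(G^{(t)})=\alpha(G)+t\,|E(G)|$ is Poljak's trick, which is how Alekseev's proof goes as well. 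The three verifications all hold: $\Delta(G^{(t)})\le 3$ kills $K_{1,4}$; every cycle of $G^{(t)}$ has length $(2t+1)$ times the length of a cycle of $G$, hence exceeds $|V(H)|$; and your $I_k$ argument is the one step that genuinely needs care and you handle it correctly, since degree-$3$ vertices of an induced $I_k$ must be original vertices of $G$, while distinct original vertices are at distance at least $2t+1>k$ in $G^{(t)}$. Two points worth tightening in a written version: for the upper bound on $\alpha(G^{(t)})$, the per-path ``deficiency'' accounting should be made explicit (if $F$ is the set of edges of $G$ with both ends in the stable set $S'$, each corresponding path contributes at most $t-1$ interior vertices, and deleting from $S'\cap V(G)$ one endpoint per edge of $F$ --- i.e.\ a vertex cover of $F$, of size at most $|F|$ --- yields the stable set of $G$ you need); alternatively, and more cleanly, prove that subdividing a single edge twice raises $\alpha$ by exactly one and iterate. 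Neither point is a gap, just bookkeeping.
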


But the complexity of the maximum stable set problem is not known
in general for $H$-free graphs when $H$ is a path or a subdivided
claw. See~\cite{hertz.lozin:survey} for a survey.

\subsection{NP-complete cases (bounded degree)}

Here, we will show that $\Gamma^d$ is NP-complete when $d \ge 3$ and
polynomial when $d = 2$.  If $\cal S$ is any finite list of cycles
$C_{k_1}, C_{k_2}, \ldots, C_{k_m}$, then we will also show that
$\Gamma^3_{\cal S}$ is NP-complete as long as $C_6 \notin \cal S$.

Let $f$ be an instance of {\sc $3$-Satisfiability}, consisting of $m$
clauses $C_1, \ldots, C_m$ on $n$ variables $z_1, \ldots, z_n$.  For
each clause $C_j$ ($j=1, \ldots, m$), with $C_j = y_{3j-2} \vee
y_{3j-1} \vee y_{3j}$, then $y_i$ ($i = 1, \ldots, 3m$) is a literal
from $\{z_1,$ $\ldots,$ $z_n,$ $\overline{z}_1,$ $\ldots,$
$\overline{z}_n\}$.

Let us build a graph $G_f$ with two specified vertices $x$ and $y$
of degree 2 such that $\Delta(G_f) = 3$ .  There will be a hole
containing $x$ and $y$ in $G_f$ if and only if there exists a truth
assignment satisfying $f$.

For each literal $y_j$ ($j=1, \ldots, 3m$), prepare a graph $G(y_j)$
on $20$ vertices $\alpha$, $\alpha'$, $\alpha^{1+}$, $\ldots$,
$\alpha^{4+}$, $\alpha^{1-}$, $\ldots$, $\alpha^{4-}$, $\beta$,
$\beta'$, $\beta^{1+}$, $\ldots$, $\beta^{4+}$, $\beta^{1-}$,
$\ldots$, $\beta^{4-}$.  (We drop the subscript $j$ in the labels of
the vertices for clarity).

For $i = 1,2,3$ add the edges $\alpha^{i+}\alpha^{(i+1)+}$,
$\beta^{i+}\beta^{(i+1)+}$, $\alpha^{i-}\alpha^{(i+1)-}$,
$\beta^{i-}\beta^{(i+1)-}$.  Also add the edges
$\alpha^{1+}\beta^{1-}$, $\alpha^{1-}\beta^{1+}$,
$\alpha^{4+}\beta^{4-}$, $\alpha^{4-}\beta^{4+}$, $\alpha\alpha^{1+}$,
$\alpha\alpha^{1-}$, $\alpha^{4+}\alpha'$, $\alpha^{4-}\alpha'$,
$\beta\beta^{1+}$, $\beta\beta^{1-}$, $\beta^{4+}\beta'$,
$\beta^{4-}\beta'$.  See Figure~\ref{fig:Gyj}.

\begin{figure}
	\centering
                \includegraphics{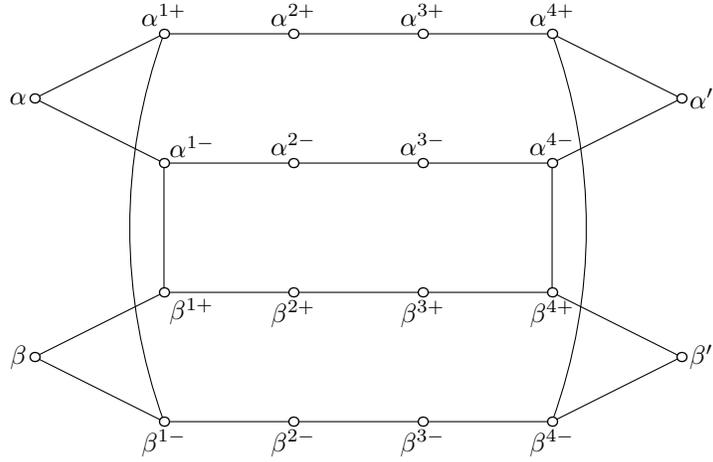}
		\caption{The graph $G(y_j)$\label{fig:Gyj}}
\end{figure}


For each clause $C_j$ $(j = 1, \ldots, m)$, prepare a graph $G(C_j)$
with 10 vertices $c^{1+}$, $c^{2+}$, $c^{3+}$, $c^{1-}$, $c^{2-}$,
$c^{3-}$, $c^{0+}$, $c^{12+}$, $c^{0-}$, $c^{12-}$.  (We drop the
subscript $j$ in the labels of the vertices for clarity).  

Add the edges $c^{12+}c^{1+}$, $c^{12+}c^{2+}$, $c^{12-}c^{1-}$,
$c^{12-}c^{2-}$, $c^{0+}c^{12+}$, $c^{0+}c^{3+}$, $c^{0-}c^{12-}$,
$c^{0-}c^{3-}$.  See Figure~\ref{fig:GCj}.

\begin{figure}
	\centering
        \includegraphics{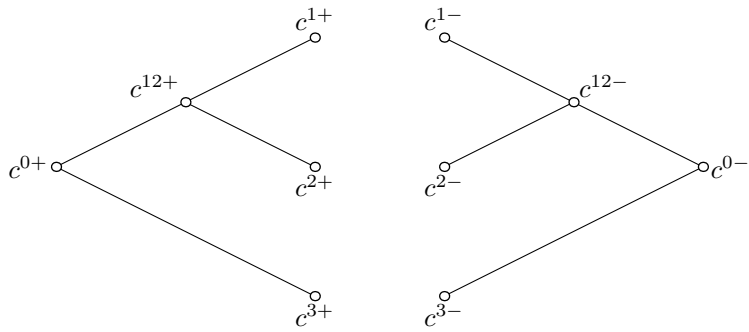}
		\caption{The graph $G(C_j)$\label{fig:GCj}}
\end{figure}



For each variable $z_i$ $(i = 1, \ldots, n)$, prepare a graph $G(z_i)$
with $2z^-_i + 2z^+_i$ vertices, where $z^-_i$ is the number of times
$\overline{z}_i$ appears in clauses $C_1, \ldots, C_m$ and $z^+_i$ is
the number of times $z_i$ appears in clauses $C_1, \ldots, C_m$.

Let $G(z_i)$ consist of two internally disjoint paths $P^+_i$ and
$P^-_i$ with common endpoints $d^+_i$ and $d^-_i$ and lengths $1 +
2z^-_i$ and $1 + 2z^+_i$ respectively. Label the vertices of $P^+_i$
as $d^+_i$, $p^+_{i,1}$, $\ldots$, $p^+_{i,2f_i}$, $d^-_i$ and label
the vertices of $P^-_i$ as $d^+_i$, $p^-_{i,1}$, $\ldots$,
$p^-_{i,2g_i}$, $d^-_i$. See Figure~\ref{fig:Gzi}.

\begin{figure}
	\centering
        \includegraphics{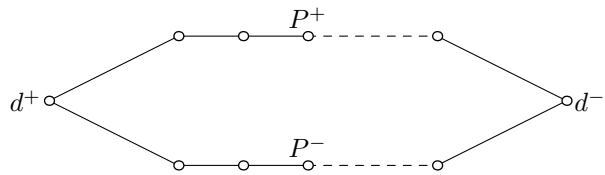}
		\caption{The graph $G(z_i)$\label{fig:Gzi}}
\end{figure}

\begin{figure}
  \label{fig:graphfinal}
	\centering
        \includegraphics{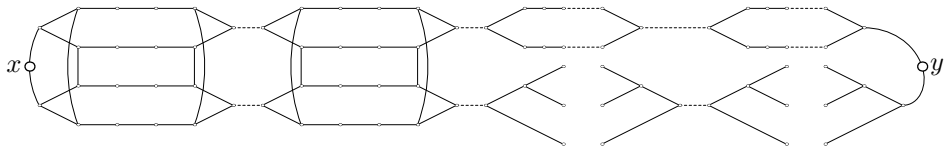}
		\caption{The final graph $G_f$\label{fig:graphfinalLin}}
\end{figure}


The final graph $G_f$ (see Figure~\ref{fig:graphfinalLin}) will be
constructed from the disjoint union of all the graphs $G(y_j)$,
$G(C_i)$, and $G(z_i)$ with the following modifications:

\begin{itemize}
\item 
  For $j = 1, \ldots, 3m-1$, add the edges
  $\alpha'_j\alpha_{j+1}$ and $\beta'_j\beta_{j+1}$.   
\item 
  For $j = 1, \ldots, m-1$, add the edge $c^{0-}_j
  c^{0+}_{j+1}$.  
\item 
  For $i = 1, \ldots, n-1$, add the edge $d^-_i d^+_{i+1}$.
\item 
  For $i = 1, \ldots, n$, let $y_{n_1}$, $\ldots$, $y_{n_{z^-_i}}$ be
  the occurrences of $\overline{z}_i$ over all literals.  For $j = 1,
  \ldots, z^-_i$, delete the edge $p^+_{i,2j-1}p^+_{i,2j}$ and add the
  four edges $p^+_{i,2j-1}\alpha^{2+}_{n_j}$,
  $p^+_{i,2j-1}\beta^{2+}_{n_j}$, $p^+_{i,2j}\alpha^{3+}_{n_j}$,
  $p^+_{i,2j}\beta^{3+}_{n_j}$.
\item 
  For $i = 1, \ldots, n$, let $y_{n_1}$, $\ldots$, $y_{n_{z^+_i}}$ be
  the occurrences of $z_i$ over all literals.  For $j = 1, 2, \ldots,
  z^+_i$, delete the edge $p^-_{i,2j-1}p^-_{i,2j}$ and add the four
  edges $p^-_{i,2j-1}\alpha^{2+}_{n_j}$,
  $p^-_{i,2j-1}\beta^{2+}_{n_j}$, $p^-_{i,2j}\alpha^{3+}_{n_j}$,
  $p^-_{i,2j}\beta^{3+}_{n_j}$.
\item 
  For $i = 1, \ldots, m$ and $j = 1, 2, 3$, add the edges
  $\alpha^{2-}_{3(i-1)+j}c^{j+}_i$, $\alpha^{3-}_{3(i-1)+j}c^{j-}_i$,
  $\beta^{2-}_{3(i-1)+j}c^{j+}_i$, $\beta^{3-}_{3(i-1)+j}c^{j-}_i$.
\item 
  Add the edges $\alpha'_{3m}d^+_1$ and $\beta'_{3m}c^{0+}_1$
\item 
  Add the vertex $x$ and add the edges $x\alpha_1$ and
  $x\beta_1$.  \item Add the vertex $y$ and add the edges $y c^{0-}_m$
  and $yd^-_n$.
\end{itemize}
It is easy to verify that $\Delta(G_f) = 3$, that the size of $G_f$ is
polynomial (actually linear) in the size $n + m$ of $f$, and that
$x,y$ are non-adjacent and both have degree two.


\begin{lemma}
\label{NPmax3lemma}
$f$ is satisfied by a truth assignment if and only if $G_f$ contains a
hole passing through $x$ and $y$.
\end{lemma}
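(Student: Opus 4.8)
The plan is to prove this bidirectional gadget-correctness lemma the same way Lemma~\ref{l:nplemma} was proved, by establishing each direction separately and by exploiting the fact that $x,y$ have degree~$2$ so that any hole through them is forced into the gadgets in a very rigid way.

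For the forward direction, I would assume a satisfying truth assignment $\xi$ and construct an explicit hole $Z$ through $x,y$. The idea is that each variable gadget $G(z_i)$ offers two internally disjoint paths $P^+_i, P^-_i$ between $d^+_i$ and $d^-_i$, and $\xi_i$ selects which one the hole will traverse; similarly each literal gadget $G(y_j)$ offers a $+$ branch and a $-$ branch between $\alpha_j/\alpha'_j$ and $\beta_j/\beta'_j$, and each clause gadget $G(C_i)$ routes through exactly one of its three literal positions. I would walk through the global cycle: start at $x$, pass through all the literal gadgets along the chosen branches (using the linking edges $\alpha'_j\alpha_{j+1}$, $\beta'_j\beta_{j+1}$), enter the variable gadgets via $\alpha'_{3m}d^+_1$, traverse each $G(z_i)$ along the branch dictated by $\xi_i$, then come back through the clause gadgets via $\beta'_{3m}c^{0+}_1$ and the $c^{0-}_jc^{0+}_{j+1}$ links, closing up at $y$. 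The key point, exactly as in Lemma~\ref{l:nplemma}, is that no red/crossing edge becomes a chord: a chord would have to connect a $p^\pm_{i,\cdot}$ vertex to some $\alpha^{2+},\alpha^{3+},\ldots$ vertex of a literal gadget, or an $\alpha^{2-},\alpha^{3-}$ vertex to a $c^{j\pm}$ vertex, and the selection rule (traverse $P^+_i$ iff $\xi_i=0$, say, and pick a satisfied literal in each clause) guarantees that at most one endpoint of each such edge lies in $Z$. I would verify this is routine case-checking across the three gadget-interface types.

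For the converse, I would assume a hole $Z$ through $x,y$ and recover a satisfying assignment by proving a sequence of claims in the style of~(\ref{clm:zgxi}) and~(\ref{clm:zgcj}), tracing how forced $Z$ is. Since $x$ has degree~$2$, both its neighbors $\alpha_1,\beta_1$ lie in $Z$, which forces $Z$ to enter $G(y_1)$ along both the $\alpha$- and $\beta$-sides; I would then argue inductively that within each $G(y_j)$ the hole must commit to one branch ($+$ or $-$) consistently on the $\alpha$ and $\beta$ sides, because the cross edges $\alpha^{1+}\beta^{1-}$, $\alpha^{1-}\beta^{1+}$, $\alpha^{4+}\beta^{4-}$, $\alpha^{4-}\beta^{4+}$ would otherwise create a chord or a premature short cycle, preventing $Z$ from reaching $y$. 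The same rigidity pushes $Z$ through exactly one path of each $G(z_i)$ (defining $\xi_i$) and through exactly one literal position of each $G(C_j)$. The decisive step is the consistency argument at the edges deleted-and-rerouted between variable and literal gadgets: if the hole uses $P^+_i$ (encoding one truth value) it must avoid the $\alpha^{2+}/\alpha^{3+}$ vertices of the occurrences of the corresponding literal, and if a clause gadget selects a literal position then the matching occurrence in the variable gadget is forced, so the selected literal evaluates to true under $\xi$. As in the earlier lemma, I would finish by noting that the forced vertices together with $x,y$ already induce a cycle, so by minimality of a hole they constitute all of $Z$, making the assignment well-defined and satisfying.

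The main obstacle will be the converse direction, specifically the branch-consistency bookkeeping: I must show that the degree-$3$ constraint together with the crossing edges inside each $G(y_j)$ forces the hole to pick one side coherently and cannot ``switch'' branches midway, and that the interface edges behave like Bienstock's red edges in ruling out chords. This is exactly the place where the parity of the gadget sizes and the placement of the four crossing edges matter, and where a poorly chosen routing could let $Z$ shortcut and falsify the reduction. I would organize it as two forced-structure claims (one for the literal/variable side, one for the clause side) each proved by the same ``follow the unique continuation until a red edge forces a contradiction'' technique used in~(\ref{clm:zgxi}), and then read off $\xi$ and verify clause satisfaction exactly as in the final paragraph of the proof of Lemma~\ref{l:nplemma}.
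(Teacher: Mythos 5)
Your overall architecture (an explicit construction for the forward direction, forced-tracing claims for the converse, exactly as in Lemma~\ref{l:nplemma}) matches the paper's proof, but you have mischaracterized the central mechanism of this particular gadget, and this makes two of your key steps false rather than merely sketchy. In the bounded-degree construction the variable--literal and clause--literal connections are \emph{not} Bienstock-style red edges whose only role is to create chords: they arise by \emph{deleting} the edges $p^{\pm}_{i,2j-1}p^{\pm}_{i,2j}$ and rerouting them through the literal gadgets. Consequently a hole that traverses $P^{+}_i$ \emph{must pass through} two of the four vertices $\alpha^{2+}_{n_j},\alpha^{3+}_{n_j},\beta^{2+}_{n_j},\beta^{3+}_{n_j}$ for every occurrence $y_{n_j}=\overline{z}_i$ --- the exact opposite of your claim that it ``must avoid'' them --- and likewise $G(C_i)$ contains no path from $c^{0+}_i$ to $c^{0-}_i$ inside itself, so the hole is forced to detour through the $\alpha^{2-}/\alpha^{3-}$ (or $\beta^{2-}/\beta^{3-}$) vertices of one literal position. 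The rigidity in the converse comes from vertex-disjointness (a hole vertex cannot have three hole neighbours) together with the crossing edges $\alpha^{1+}\beta^{1-}$, $\alpha^{1-}\beta^{1+}$, $\alpha^{4+}\beta^{4-}$, $\alpha^{4-}\beta^{4+}$, not from chord-avoidance at the interfaces.

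This same misunderstanding breaks your forward direction, because the polarity convention you hedge on (``traverse $P^{+}_i$ iff $\xi_i=0$, say'') is not a free choice. The clause detour needs the $-$ vertices of the selected position, so the selected literal's gadget must run its chains on the $+$ side, forcing ``satisfied $\Leftrightarrow$ $+$ branch''; and the detours of $P^{+}_i$ need the $+$ vertices of the $\overline{z}_i$-occurrence gadgets, so those chains must run on the $-$ side, i.e.\ those literals must be unsatisfied, forcing ``$\xi_i=1 \Leftrightarrow P^{+}_i$'' as in the paper. Your rule cannot be patched by adjusting the other conventions: with the paper's chain convention, a satisfied occurrence of $\overline{z}_i$ has its literal chain and the $P^{+}_i$-detour competing for $\alpha^{2+}_{n_j}$, which then has three selected neighbours, so your selected set does not induce a hole; if instead you flip the chain convention, then a clause all of whose literals are satisfied has no selectable position and the cycle cannot close. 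So as written your selection rule fails, and your converse consistency step derives the right conclusion (``the selected literal evaluates to true'') from a false intermediate statement; both directions are repairable, but only by adopting the paper's linkage of polarities, which is precisely the content you would need to supply.
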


\begin{proof}
First assume that $f$ is satisfied by a truth assignment $\xi \in
\{0,1\}^n$. We will pick a set of vertices that induce a hole
containing $x$ and $y$.

\begin{enumerate}
	\item 
Pick vertices $x$ and $y$.
	\item 
For $i = 1, \ldots, 3m$, pick the vertices $\alpha_i$, $\alpha'_i$,
$\beta_i$, $\beta'_i$.
	\item 
For $i = 1, \ldots, 3m$, if $y_i$ is satisfied by $\xi$, then pick the
vertices $\alpha^{1+}_i$, $\alpha^{2+}_i$, $\alpha^{3+}_i$,
$\alpha^{4+}_i$, $\beta^{1+}_i$, $\beta^{2+}_i$, $\beta^{3+}_i$, and
$\beta^{4+}_i$.  Otherwise, pick the vertices $\alpha^{1-}_i$,
$\alpha^{2-}_i$, $\alpha^{3-}_i$, $\alpha^{4-}_i$, $\beta^{1-}_i$,
$\beta^{2-}_i$, $\beta^{3-}_i$, and $\beta^{4-}_i$.
	\item 
For $i = 1, \ldots, n$, if $\xi_i = 1$, then pick all the vertices of
the path $P^+_i$ and all the neighbors of the vertices in $P^+_i$ of
the form $\alpha^{2+}_k$ or $\alpha^{3+}_k$ for any $k$.
	\item 
For $i = 1, \ldots, n$, if $\xi_i = 0$, then pick all the vertices of
the path $P^-_i$ and all the neighbors of the vertices in $P^-_i$ of
the form $\alpha^{2+}_k$ or $\alpha^{3+}_k$ for any $k$.
	\item 
For $i = 1, \ldots, m$, pick the vertices $c^{0+}_i$ and $c^{0-}_i$.
Choose any $j \in \{3i-2, 3i-1, 3i\}$ such that $\xi$ satisfies $y_j$.
Pick vertices $\alpha^{2-}_j$, and $\alpha^{3-}_j$.  If $j = 3i-2$,
then pick the vertices $c^{12+}_i$, $c^{1+}_i$, $c^{1-}_i$,
$c^{12-}_i$.  If $j = 3i-1$, then pick the vertices $c^{12+}_i$,
$c^{2+}_i$, $c^{2-}_i$, $c^{12-}_i$.  If $j = 3i$, then pick the
vertices $c^{3+}_i$ and $c^{3-}_i$.
\end{enumerate}
	
It suffices to show that the chosen vertices induce a hole containing
$x$ and $y$.  The only potential problem is that for some $k$, one of
the vertices $\alpha^{2+}_k$, $\alpha^{3+}_k$, $\alpha^{2-}_k$, or
$\alpha^{3-}_k$ was chosen more than once.  If $\alpha^{2+}_k$ and
$\alpha^{3+}_k$ were picked in Step 3, then $y_k$ is satisfied by
$\xi$.  Therefore, $\alpha^{2+}_k$ and $\alpha^{3+}_k$ were not chosen
in Step 4 or Step 5.  Similarly, if $\alpha^{2-}_k$ and
$\alpha^{3-}_k$ were picked in Step 6, then $y_k$ is satisfied by
$\xi$ and $\alpha^{2-}_k$ and $\alpha^{3-}_k$ were not picked in Step
3.  Thus, the chosen vertices induce a hole in $G$ containing vertices
$x$ and $y$.

Now assume $G_f$ contains a hole $H$ passing through $x$ and $y$.  The
hole $H$ must contain $\alpha_1$ and $\beta_1$ since they are the only
two neighbors of $x$.  Next, either both $\alpha_1^{1+}$ and
$\beta_1^{1+}$ are in $H$, or both $\alpha_1^{1-}$ and $\beta_1^{1-}$
are in $H$.

Without loss of generality, let $\alpha_1^{1+}$ and $\beta_1^{1+}$ be
in $H$ (the same reasoning that follows will hold true for the other
case).  Since $\beta_1^{1-}$ and $\alpha_1^{1-}$ are both neighbors of
two members in $H$, they cannot be in $H$.  Thus, $\alpha^{2+}_1$ and
$\beta^{2+}_1$ must be in $H$.  Since $\alpha^{2+}_1$ and
$\beta^{2+}_1$ have the same neighbor outside $G(y_1)$, it follows
that $H$ must contain $\alpha^{3+}_1$ and $\beta^{3+}_1$.  Also, $H$
must contain $\alpha^{4+}_1$ and $\beta^{4+}_1$.  Suppose that
$\alpha^{4-}_1$ and $\beta^{4-}_1$ are in $H$.  Because
$\alpha^{i-}_1$ has the same neighbor as $\beta^{i-}_1$ outside
$G(y_1)$ for $i = 2, 3$, it follows that $H$ must contain
$\alpha^{3-}_1$, $\alpha^{2-}_1$, and $\alpha_1^{1-}$.  But then $H$
is not a hole containing $b$, a contradiction.  Therefore,
$\alpha^{4-}_1$ and $\beta^{4-}_1$ cannot both be in $H$, so $H$ must
contain $\alpha'_1$, $\beta'_1$, $\alpha_2$, and $\beta_2$.

By induction, we see for $i = 1, 2, \ldots, 3m$ that $H$ must contain
$\alpha_i$, $\alpha'_i$, $\beta_i$, $\beta'_i$.  Also, for each $i$,
either $H$ contains $\alpha^{1+}_i$, $\alpha^{2+}_i$, $\alpha^{3+}_i$,
$\alpha^{4+}_i$, $\beta^{1+}_i$, $\beta^{2+}_i$, $\beta^{3+}_i$,
$\beta^{4+}_i$ or $H$ contains $\alpha^{1-}_i$, $\alpha^{2-}_i$,
$\alpha^{3-}_i$, $\alpha^{4-}_i$, $\beta^{1-}_i$, $\beta^{2-}_i$,
$\beta^{3-}_i$, $\beta^{4-}_i$.

As a result, $H$ must also contain $d^+_1$ and $c^{0+}_1$.  By
symmetry, we may assume $H$ contains $p^+_{1,1}$ and $\alpha^{2+}_k$
for some $k$.  Since $\alpha_k^{1+}$ is adjacent to two vertices in
$H$, $H$ must contain $\alpha^{3+}_k$.  Similarly, $H$ cannot contain
$\alpha_k^{4+}$, so $H$ contains $p^+_{1,2}$ and $p^+_{1,3}$.  By
induction, we see that $H$ contains $p^+_{1,i}$ for $i = 1, 2, \ldots,
z^+_i$ and $d^-_1$.  If $H$ contains $p^-_{1,z^-_i}$, then $H$ must
contain $p^-_{1,i}$ for $i = z^-_i, \ldots, 1$, a contradiction.
Thus, $H$ must contain $d^+_2$.  By induction, for $i = 1, 2, \ldots,
n$, we see that $H$ contains all the vertices of the path $P^+_i$ or
$P^-_i$ and by symmetry, we may assume $H$ contains all the neighbors
of the vertices in $P^+_i$ or $P^-_i$ of the form $\alpha^{2+}_k$ or
$\alpha^{3+}_k$ for any $k$.

Similarly, for $i = 1, 2, \ldots, m$, it follows that $H$ must contain
$c^{0+}_i$ and $c^{0-}_i$.  Also, $H$ contains one of the following:
\begin{itemize}
	\item $c^{12+}_i$, $c^{1+}_i$, $c^{1-}_i$, $c^{12-}_i$ and
          either $\alpha^{2-}_j$ and $\alpha^{3-}_j$ or $\beta^{2-}_j$
          and $\beta^{3-}_j$ (where $\alpha^{2-}_j$ is adjacent to
          $c^{1+}_i$).
	\item $c^{12+}_i$, $c^{2+}_i$, $c^{2-}_i$, $c^{12-}_i$ and
          either $\alpha^{2-}_j$ and $\alpha^{3-}_j$ or $\beta^{2-}_j$
          and $\beta^{3-}_j$ (where $\alpha^{2-}_j$ is adjacent to
          $c^{2+}_i$).
	\item $c^{3+}_i$ and $c^{3-}_i$ and either $\alpha^{2-}_j$ and
          $\alpha^{3-}_j$ or $\beta^{2-}_j$ and $\beta^{3-}_j$ (where
          $\alpha^{2-}_j$ is adjacent to $c^{3+}_i$).
\end{itemize}

We can recover the satisfying assignment $\xi$ as follows.  For $i =
1, 2, \ldots, n$, set $\xi_i = 1$ if the vertices of $P^+_i$ are in
$H$ and set $\xi_i = 0$ if the vertices of $P^-_i$ are in $H$.  By
construction, it is easy to verify that at least one literal in every
clause is satisfied, so $\xi$ is indeed a satisfying assignment.
\end{proof}

Note that the graph $G_f$ used above contains several $C_6$'s that we
could not eliminate, induced for instance by $\alpha, \alpha^{1+},
\beta^{1-}, \beta, \beta^{1+}, \alpha^{1-}$.

\newpage
\begin{theorem}
\label{th:delta3}
The following statements hold:
\begin{itemize}
\item 
For any $d \in \mathbb{Z}$ with $d \ge 2$, the problem $\Gamma^d$ is
NP-complete when $d \ge 3$ and polynomial when $d = 2$.  
\item 
If $\mathcal{H}$ is any finite list of cycles $C_{k_1}, C_{k_2},
\ldots, C_{k_m}$ such that $C_6 \notin \mathcal{H}$, then
$\Gamma^3_\mathcal{H}$ is NP-complete.
\end{itemize}
\end{theorem}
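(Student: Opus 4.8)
The plan is to treat the four assertions separately: polynomiality for $d=2$, NP-completeness of $\Gamma^3$, its propagation to all $d\ge 3$, and finally the cycle-restricted variant $\Gamma^3_{\mathcal{H}}$. Membership of every $\Gamma^d_{\mathcal{S}}$ in NP is immediate (a hole through $x,y$ is a polynomial-size certificate checkable in polynomial time), so throughout I only argue NP-hardness for the hard cases. For $d=2$, first I would observe that a graph $G$ with $\Delta(G)\le 2$ is a disjoint union of paths and cycles, and that its only induced cycles are its cycle-components. Since $x,y$ have degree exactly $2$, a hole through both exists if and only if $x$ and $y$ lie in a common connected component that is a cycle; as $x,y$ are non-adjacent this cycle automatically has length at least $4$ and is therefore a hole. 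This condition is decidable in linear time by exploring the component of $x$, so $\Gamma^2$ is polynomial.

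For $d=3$ I would invoke directly the construction preceding the theorem: the map $f\mapsto(G_f,x,y)$ is computable in polynomial (in fact linear) time, the stated properties give $\Delta(G_f)=3$ with $x,y$ non-adjacent of degree $2$, and Lemma~\ref{NPmax3lemma} asserts exactly that $f$ is satisfiable iff $G_f$ has a hole through $x,y$. Since $\Gamma^3=\Gamma^3_{\emptyset}$ forbids no subgraphs, $(G_f,x,y)$ is a valid instance, so this is a polynomial reduction from {\sc $3$-Satisfiability} and $\Gamma^3$ is NP-complete. To pass from $d=3$ to arbitrary $d\ge 3$, I would note that every instance $(G,x,y)$ of $\Gamma^3$ (so $\Delta(G)\le 3$) is literally an instance of $\Gamma^d$ (since $\Delta(G)\le 3\le d$ and neither problem forbids any subgraph), with the same answer; the identity map is thus a reduction $\Gamma^3\le_p\Gamma^d$, giving NP-completeness of $\Gamma^d$ for all $d\ge 3$.

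The cycle-restricted case is the substantive part. The idea is to reuse $G_f$ but \emph{lengthen} it so that its only induced cycles of bounded length are the unavoidable $C_6$ caps of the gadgets $G(y_j)$ (such as $\alpha,\alpha^{1+},\beta^{1-},\beta,\beta^{1+},\alpha^{1-}$), while every other induced cycle is forced to have length exceeding $K:=\max_i k_i$. Concretely, I would subdivide every edge that is \emph{not} one of the cap edges --- that is, lengthen the $\alpha^{i\pm}$ and $\beta^{i\pm}$ chains, the paths $P_i^{\pm}$, and all the inter-gadget connecting edges --- by a common large factor, leaving the crossing edges $\alpha^{1+}\beta^{1-}$, $\alpha^{1-}\beta^{1+}$, $\alpha^{4+}\beta^{4-}$, $\alpha^{4-}\beta^{4+}$ and the edges to $\alpha,\alpha',\beta,\beta'$ untouched, so that the $C_6$'s survive as $C_6$'s. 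Subdividing only path-interior edges keeps $\Delta=3$ and leaves the gadget logic intact, so the proof of Lemma~\ref{NPmax3lemma} should carry over essentially unchanged. Because $6\notin\mathcal{H}$ and every induced cycle other than a cap now has length exceeding $K$, the lengthened graph $G_f'$ contains no induced $C_{k_i}$; hence $(G_f',x,y)$ is a legitimate instance of $\Gamma^3_{\mathcal{H}}$ and the reduction shows $\Gamma^3_{\mathcal{H}}$ is NP-complete.

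The hard part will be the bookkeeping in this last step, which splits into two verifications. The first is that no induced $C_{k_i}$ remains \emph{anywhere}: I must enumerate the induced cycles of $G_f$ --- including those hiding in the clause gadgets $G(C_j)$, the variable gadgets $G(z_i)$, and the red edges glued between literal, clause and variable gadgets --- and check that, apart from the $C_6$ caps, each of them runs through at least one subdivided path-interior edge, so that its length scales with the chosen factor and exceeds $K$. The second is that the lengthening preserves the correspondence of Lemma~\ref{NPmax3lemma}: the ``satisfiable $\Rightarrow$ hole'' direction is routine (lengthen the constructed hole), but the converse requires ruling out \emph{spurious} holes through $x,y$ created by subdivision, since subdividing edges of a non-induced cycle can turn it into an induced one. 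Confirming that the forcing at each cap still rests on the untouched crossing edges is exactly where the care is needed, and this inability to lengthen the caps is precisely what forces the hypothesis $C_6\notin\mathcal{H}$.
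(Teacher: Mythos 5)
Your handling of the first bullet is correct and matches the paper: the linear-time component check for $d=2$, NP-completeness of $\Gamma^3$ directly from Lemma~\ref{NPmax3lemma}, and the observation that any instance of $\Gamma^3$ is already an instance of $\Gamma^d$ for every $d\ge 3$.

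The second bullet, however, contains a genuine error: you subdivide the wrong set of edges. The paper's modification lengthens \emph{only} the chain edges $\alpha^{i\pm}_j\alpha^{(i+1)\pm}_j$ and $\beta^{i\pm}_j\beta^{(i+1)\pm}_j$, the connectors $\alpha'_j\alpha_{j+1}$ and $\beta'_j\beta_{j+1}$, and the two edges at $x$; it deliberately keeps every edge joining a literal gadget to a variable or clause gadget (the edges at the vertices $p^{\pm}_{i,\cdot}$ and $c^{\cdot\pm}_j$) as a single edge, and it also keeps the paths $P^{\pm}_i$ and the clause gadgets intact. Your plan subdivides all of these as well (``the paths $P_i^{\pm}$ and all the inter-gadget connecting edges''). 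But those edges are exactly the \emph{chords} on which the proof of Lemma~\ref{NPmax3lemma} rests: the step ``since $\alpha^{2+}_1$ and $\beta^{2+}_1$ have the same neighbor outside $G(y_1)$, $H$ must contain $\alpha^{3+}_1$ and $\beta^{3+}_1$'' works because, were the hole to turn off at $\alpha^{2+}_1$ toward the common neighbor $p$, the edge $p\beta^{2+}_1$ would be a chord of $H$. Once $p\alpha^{2+}_1$ and $p\beta^{2+}_1$ are subdivided, no chord arises: the hole can exit the literal gadget at $\alpha^{2+}_1$ into $G(z_i)$ while its $\beta$-branch keeps descending the chain. So the danger is not at the caps, where you locate it, but at the vertices $\alpha^{2\pm},\alpha^{3\pm},\beta^{2\pm},\beta^{3\pm}$, and Lemma~\ref{NPmax3lemma} does not ``carry over essentially unchanged'' to your graph --- it becomes false.

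Concretely, for the unsatisfiable formula $f=(z_1\vee z_1\vee z_1)\wedge(\overline{z}_1\vee\overline{z}_1\vee\overline{z}_1)$ your graph contains a hole through $x$ and $y$. One branch goes $x,\alpha_1,\alpha^{1+}_1,\alpha^{2+}_1$, escapes to $p^-_{1,1}$, $d^+_1$, and follows $P^+_1$ with its detours $p^+_{1,1},\alpha^{2+}_4,\alpha^{3+}_4,p^+_{1,2},p^+_{1,3},\alpha^{2+}_5,\alpha^{3+}_5,p^+_{1,4},p^+_{1,5},\alpha^{2+}_6,\alpha^{3+}_6,p^+_{1,6},d^-_1$ to $y$. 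The other branch goes $x,\beta_1$, down the rows $\beta^{1+}_j,\beta^{2+}_j,\beta^{3+}_j,\beta^{4+}_j,\beta'_j$ of gadgets $j=1,\dots,6$, then $c^{0+}_1,c^{3+}_1,\alpha^{2-}_3,\alpha^{3-}_3,c^{3-}_1,c^{0-}_1,c^{0+}_2,c^{3+}_2,\alpha^{2-}_6,\alpha^{3-}_6,c^{3-}_2,c^{0-}_2,y$ (consecutive vertices being joined by the corresponding subdivided paths). In the paper's graph this cycle is not induced, because of the chords $p^-_{1,1}\beta^{2+}_1$, $p^+_{1,1}\beta^{2+}_4$, $p^+_{1,2}\beta^{3+}_4$, and so on; in your graph every one of these adjacencies has been subdivided away, so the cycle is a hole through $x,y$ even though $f$ is unsatisfiable, and the reduction fails. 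The repair is precisely the paper's choice: keep all literal-to-variable and literal-to-clause edges (and the variable- and clause-gadget edges) unsubdivided so that they still act as chords. Then the forcing proof of Lemma~\ref{NPmax3lemma} goes through verbatim, and since each of $\alpha^{2\pm},\alpha^{3\pm},\beta^{2\pm},\beta^{3\pm}$ is incident to exactly one unsubdivided edge, every induced cycle other than the $C_6$ caps must traverse a subdivided path and hence has length greater than $K$, which is what kills all members of $\mathcal{H}$.
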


\begin{proof}
In the above reduction, $\Delta(G_f) = 3$ so $\Gamma^d$ is NP-complete
for $d \ge 3$.  When $d = 2$, there is a simple $O(n)$ algorithm.  Any
hole containing $x$ and $y$ must be a component of $G$ so pick the
vertex $x$ and consider the component $C$ of $G$ that contains $x$.
It takes $O(n)$ time to verify whether $C$ is a hole containing $x$
and $y$ or not.

To show the second statement, let $K$ be the length of the longest
cycle in $\mathcal{H}$.  In the above reduction, do the following
modifications.

\begin{itemize}
	\item 
	For $i = 1, 2, 3$ and $j = 1,2,\ldots, 3m$, replace the edges
	$\alpha^{i+}_j\alpha^{(i+1)+}_j$,
	$\alpha^{i-}_j\alpha^{(i+1)-}_j$,
	$\beta^{i+}_j\beta^{(i+1)+}_j$, and
	$\beta^{i-}_j\beta^{(i+1)-}_j$ by paths of length $K$.  
	\item
	For $j = 1,2,\ldots, 3m-1$, replace the edges
	$\alpha'_j\alpha_{j+1}$ and $\beta'_j\beta_{j+1}$ by paths of
	length $K$.  
	\item 
	Replace the edges $x\alpha_1$ and $x\beta_1$ by paths of
	length $K$.
\end{itemize}

This new reduction is polynomial in $n, m$ and contains no graph of
the list $\mathcal{H}$.  The proof of Lemma~\ref{NPmax3lemma} still
holds for this new reduction therefore $\Gamma^3_\mathcal{H}$ is
NP-complete.
\end{proof}

\section{$\Pi_B$ for some special s-graphs}

\subsection{Holes with pending edges and trees}
\label{sec:antenna}

Here, we study $\Pi_{B_4}$, \dots, $\Pi_{B_7}$ where $B_4,\dots, B_7$
are the s-graphs depicted on Figure~\ref{fig:antenna}.  Our motivation
is simply to give a striking example and to point out that
surprisingly, pending edges of s-graphs matter and that even an
s-graph with no cycle can lead to NP-complete problems.

\begin{theorem}
  \label{th:antena}
  There is an $O(n^{13})$-time algorithm for $\Pi_{B_4}$ but
  $\Pi_{B_5}$ is NP-complete.
\end{theorem}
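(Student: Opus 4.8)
The plan is to treat the two halves separately, reusing the hole-detection machinery of Section~\ref{sec:bienstock}: an NP-completeness reduction for $\Pi_{B_5}$ from one of the hard problems of Theorem~\ref{th:bienstock}, and a three-in-a-tree algorithm for $\Pi_{B_4}$ in the style of Theorem~\ref{th:clawpoly}. The dichotomy is driven by the location of the pending edges: those of $B_4$ are concentrated at a single vertex, so a realisation is a hole with pending paths leaving from one place, whereas $B_5$ has pending edges at two vertices that any realisation keeps far apart, which is exactly the feature that lets us encode the two prescribed vertices of a $\Gamma$-instance.

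For $\Pi_{B_5}$ I would reduce from $\Gamma_{\{I_1,\dots,I_k,C_5,\dots,C_k,K_{1,4}\}}$, NP-complete for every $k\ge 5$ by Theorem~\ref{th:bienstock}. Given an instance $(G,x,y)$, with $x,y$ non-adjacent of degree $2$ and $G$ free of the listed subgraphs, I would form $G'$ by grafting onto $x$ and onto $y$ the pending parts prescribed by $B_5$ (a single new degree-one vertex at each realises the corresponding subdivisible pending edge). The easy direction is that a hole of $G$ through $x,y$, augmented by the two grafted vertices, is an induced realisation of $B_5$ in $G'$, because the grafted vertices are anticomplete to everything except $x$ and $y$. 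The content is the converse. A realisation $R$ of $B_5$ in $G'$ is a hole $C$ carrying pending paths at two distinct vertices $u,w$; deleting from $C$ one vertex not in $\{u,w\}$ turns $R$ into an induced tree with exactly two branch vertices $u,w$ joined by a path, hence an induced $I_l$ once we keep the first vertex of each of the four arms. The exclusion of $I_1,\dots,I_k$, together with the exclusion of the short holes $C_5,\dots,C_k$ and of $K_{1,4}$, is meant to force the two pending paths to be precisely the grafted degree-one vertices, so that $C$ is a genuine hole of $G$ through $x$ and $y$. Membership in NP is clear: guess the vertex set of a candidate and check in polynomial time that the graph it induces is a subdivision of $B_5$.

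For $\Pi_{B_4}$, since all pending edges sit at one vertex, I would fix by brute force the images in $G$ of the bounded set of branch and real vertices of $B_4$ together with the first vertex of each subdivisible path, contributing a factor $O(n^{9})$. For each guess I would, exactly as in the proof of Theorem~\ref{th:clawpoly}, delete the interiors of the already fixed induced paths and the neighbours of those interiors, so that no chord can be created, and then run three-in-a-tree on a suitably chosen triple in the resulting graph to complete the remaining subdivisible paths simultaneously. A realisation of $B_4$ exists if and only if three-in-a-tree answers YES for some guess; multiplying the $O(n^{4})$ cost of three-in-a-tree by the $O(n^{9})$ guesses yields the announced $O(n^{13})$ bound.

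The main obstacle is twofold. In the algorithm it is to rig the three-in-a-tree instance so that an induced tree through the chosen triple corresponds to an induced realisation of $B_4$ and to nothing degenerate: this needs the neighbourhood deletions above to remove all potential chords, followed by a case analysis, like the one ending the proof of Theorem~\ref{th:clawpoly}, of whether the minimal returned tree is a path or a genuine subdivided claw. In the reduction it is to prove that the forbidden-subgraph hypotheses really do eliminate every spurious realisation of $B_5$; the delicate point is controlling the length of the arc of $C$ between $u$ and $w$, so that the $I_l$ produced above indeed falls in the excluded range $1\le l\le k$ and the two pending paths cannot be realised inside $G$ away from $x$ and $y$.
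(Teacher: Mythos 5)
Your reduction for $\Pi_{B_5}$ has a fatal gap, and it is exactly at the point you flag as ``delicate''. You graft a single pendant vertex onto each of $x$ and $y$ and hope that the exclusion of $I_1,\dots,I_k$, $C_5,\dots,C_k$ and $K_{1,4}$ kills every spurious realisation. It cannot: the excluded subgraphs are bounded by the fixed constant $k$, while the instance graphs of $\Gamma_{\{I_1,\dots,I_k,C_5,\dots,C_k,K_{1,4}\}}$ may contain arbitrarily long holes. Concretely, take a cycle of length $2k+10$ with one pendant vertex attached at each of two antipodal vertices, disjoint union with a path $a\tp x\tp b\tp y\tp c$. This graph contains none of the excluded subgraphs (the only cycle is too long, the two degree-3 vertices are at distance greater than $k$, and the maximum degree is 3), it has no hole through $x,y$, yet after your grafting the graph $G'$ contains a hole with pendant paths at two distinct vertices. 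So your reduction answers YES on a NO-instance; no choice of $k$ repairs this, since $k$ must be fixed before the instance is given. The paper avoids the problem entirely by reducing from $\Gamma^3$ (NP-complete by Theorem~\ref{th:delta3}): the instance graph has maximum degree 3, and two pendant vertices are grafted at each of $x$ and $y$. This matters because of the actual shape of $B_5$ (which you misread from the unavailable figure): a realisation of $B_5$ has \emph{two} vertices of degree 4 (two pendant edges at each of two hole vertices), and in $G'$ the only vertices of degree at least 4 are $x$ and $y$, so every realisation is forced to use $x,y$ as its branch vertices and its hole lies in $G$. (Your idea could be salvaged along the lines the paper hints at for $\Gamma_{\{K_{1,4}\}}$: with two pendants grafted at each of $x,y$, every $K_{1,4}$ of $G'$ is centered at $x$ or $y$ --- but that argument needs two pendant edges per branch vertex, which is precisely what $B_5$ has and your version does not.)

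On the $\Pi_{B_4}$ half your template (enumerate a bounded tuple fixing the local structure, delete the neighbourhoods of the fixed interior vertices, run three-in-a-tree, giving $O(n^9)\cdot O(n^4)=O(n^{13})$) is the same as the paper's, although your premise about $B_4$ is also wrong: its realisations have one vertex of degree 3 and one of degree 4, so the pendant edges sit at \emph{two} vertices, not one. More importantly, you miss a phase the paper needs: when the degree-3 and degree-4 vertices of the realisation are at distance at most 3 on the hole, the 9-tuple one would like to guess cannot be induced and vertex-disjoint, so the three-in-a-tree setup breaks down. The paper runs a separate $O(n^9)$ brute-force procedure for these ``short'' realisations first, and only then assumes the two branch vertices are far apart. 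A single-phase version of your algorithm would miss exactly those realisations.
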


\begin{proof}
A realisation of $B_4$ has exactly one vertex of degree 3 and one
vertex of degree 4.  Let us say that the realisation $H$ is
\emph{short} if the distance between these two vertices in $H$ is at
most~3.  Detecting short realisations of $B_4$ can be done in time
$n^9$ as follows: for every 6-tuple $F = (a, b, x_1, x_2, x_3, x_4)$
such that $G[F]$ has edge-set $\{x_1a, ax_2, x_2b, bx_3, bx_4\}$ and
for every 7-tuple $F = (a, b, x_1, x_2, x_3, x_4, x_5)$ such that
$G[F]$ has edge-set $\{x_1a, ax_2, x_2x_3, x_3b, bx_4, bx_5\}$, delete
$x_1, \dots, x_5$ and their neighbors except $a,b$.  In the resulting
graph, check whether $a$ and $b$ are in the same component.  The
answer is YES for at least one 7-or-6-tuple if and only if $G$
contains at least one short realisation of ${B_4}$.

Here is an algorithm for $\Pi_{B_4}$, assuming that the entry graph
$G$ has no short realisation of $B_4$.  For every 9-tuple $F = (a, b,
c, x_1, \dots, x_6)$ such that $G[F]$ has edge-set $\{x_1a, bx_2,
x_2x_3, x_3x_4, cx_5, x_5x_3, x_3x_6\}$ delete $x_1, \dots, x_6$ and
their neighbors except $a, b, c$.  In the resulting graph, run
three-in-a-tree for $a, b, c$.  It is easily checked that the answer
is YES for some 9-tuple if and only if $G$ contains a realisation of
$B_4$.

Let us prove that $\Pi_{B_5}$ is NP-complete by a reduction of
$\Gamma^3$ to $\Pi_{B_5}$.  Since by Theorem~\ref{th:delta3},
$\Gamma^3$ is NP-complete, this will complete the proof.  Let $(G, x,
y)$ be an instance of $\Gamma^3$.  Prepare a new graph $G'$: add four
vertices $x', x'', y', y''$ to $G$ and add four edges $xx', xx'', yy',
yy''$.  Since $\Delta(G) \leq 3$, it is easily seen that $G$ contains a
hole passing through $x, y$ if and only if $G'$ contains a realisation
of $B_5$.
\end{proof}

The proof of the theorem below is omitted since it is  similar
to the proof of Theorem~\ref{th:antena}.

\begin{theorem}
There is an $O(n^{14})$-time algorithm for $\Pi_{B_6}$ but $\Pi_{B_7}$
is NP-complete.
\end{theorem}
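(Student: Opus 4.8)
The plan is to mirror the two-part structure of the proof of Theorem~\ref{th:antena}: first give a polynomial algorithm for $\Pi_{B_6}$ built on three-in-a-tree, and then prove NP-completeness of $\Pi_{B_7}$ by a reduction from $\Gamma^3$.

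For $\Pi_{B_6}$, a realisation $H$ of $B_6$ consists of a bounded number of branch vertices (those of degree at least three, whose degrees are fixed by $B_6$) joined by subdivided paths, together with the prescribed pending edges. As in the $B_4$ algorithm I would first dispose of the \emph{short} realisations, in which all the branch vertices lie within bounded distance of one another: these I detect by brute force, enumerating every tuple of vertices realising the branch vertices together with the short paths between them, and then testing for the remaining unbounded subdivided paths by deleting the chosen interior vertices along with their neighbors outside the partial realisation (so that inducedness is forced) and checking whether the relevant endpoints lie in the same component of the reduced graph. For the general (non-short) case I would enumerate a tuple pinning down each branch vertex together with the first few vertices of each incident path, delete the interiors of the already-fixed paths together with their external neighbors, and call three-in-a-tree on the three endpoints that must still be joined by the unfixed part of the skeleton. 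The exponent $O(n^{14})$ arises because this enumeration ranges over $10$-tuples and each three-in-a-tree call costs $O(n^4)$, exactly one power of $n$ more than in the $B_4$ case.

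For $\Pi_{B_7}$ I would reduce from $\Gamma^3$, which is NP-complete by Theorem~\ref{th:delta3}. Given an instance $(G, x, y)$ of $\Gamma^3$, I build $G'$ by attaching to $x$ and to $y$ the pending edges prescribed by $B_7$, just as for $B_5$ two pending leaves were added at each of $x$ and $y$. Since $\Delta(G) \le 3$ and $x, y$ have degree two, the new leaves have no other neighbors and cannot create chords, so a hole of $G$ through $x, y$ extends to a realisation of $B_7$ in $G'$, and conversely any realisation of $B_7$ in $G'$ restricts to a hole through $x, y$ in $G$. This yields the equivalence ``$G$ has a hole through $x, y$'' $\Leftrightarrow$ ``$G'$ contains a realisation of $B_7$'', giving the NP-completeness.

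The main obstacle is the correctness of the three-in-a-tree step for $B_6$, namely both directions of the equivalence between a YES answer of some three-in-a-tree call and the existence of an induced realisation of $B_6$. The forward direction is routine: a genuine realisation yields, after deletion of the fixed interiors, an induced tree through the three prescribed vertices. The converse is the delicate part: taking a vertex-inclusion-wise minimal induced tree $T$ returned by three-in-a-tree and arguing --- by a case analysis on the shape of $T$ (a path in its various orderings, or a genuine subdivided claw) along the lines of the converse argument in the proof of Theorem~\ref{th:clawpoly} --- that reattaching the fixed paths and pending edges yields a chordless induced realisation of $B_6$ rather than some larger or degenerate structure. Checking that the deletions of external neighbors are precisely those needed to keep $T$ from acquiring spurious chords with the fixed part is where the care lies.
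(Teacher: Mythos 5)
The paper itself omits the proof of this theorem, stating only that it is similar to the proof of Theorem~\ref{th:antena}; your two-part plan (short-realisation enumeration plus three-in-a-tree for $\Pi_{B_6}$, and a reduction from $\Gamma^3$ for $\Pi_{B_7}$) is exactly that intended strategy, and your accounting $O(n^{10})\cdot O(n^4)=O(n^{14})$ for the $\Pi_{B_6}$ half is the right analogue of the $O(n^9)\cdot O(n^4)$ bound for $B_4$.

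There is, however, a genuine gap in your $\Pi_{B_7}$ half. You copy the $B_5$ reduction verbatim: attach to $x$ and $y$ the pending edges prescribed by $B_7$ and claim that a hole of $G$ through $x,y$ ``extends to a realisation of $B_7$ in $G'$,'' and conversely that any realisation ``restricts to a hole.'' This presupposes that $B_7$, like $B_5$, is a cycle with pending edges. But the stated motivation of Section~\ref{sec:antenna} is that ``even an s-graph with no cycle can lead to NP-complete problems,'' and since $B_5$ does contain a cycle (its realisations contain holes, as the proof of Theorem~\ref{th:antena} makes explicit), the acyclic NP-complete example must be $B_7$. A realisation of an acyclic s-graph is an induced \emph{tree}; it cannot contain a hole, nor be obtained by extending one, so the equivalence you assert cannot even be stated for $B_7$. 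The reduction has to convert the hole into a tree instead: exploiting that $y$ has degree $2$, the graph $G$ has a hole through $x,y$ if and only if $G\setminus y$ has an induced path through $x$ joining the two neighbours $y',y''$ of $y$; one then attaches the pending edges of $B_7$ at $x$, $y'$ and $y''$ and must argue, by a degree argument in the spirit of the $B_5$ case (using $\Delta(G)\le 3$ so that only the designated vertices can carry the high-degree branch vertices of a realisation), that every realisation of $B_7$ in the new graph places its branch vertices on $x,y',y''$ in the correct roles, ruling out both realisations sitting elsewhere in $G$ and realisations in which the roles of $x$, $y'$, $y''$ are permuted (the latter would yield a hole through $y$ but not through $x$). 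None of this appears in your construction, and without it the claimed equivalence is false as written.
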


\subsection{Induced subdivisions of $K_5$}
\label{sec.k5}

Here, we study the problem of deciding whether a graph contains an
induced subdivision of $K_5$. More precisely, we put : $sK_5 = (\{a, b,
c, d, e\}, \emptyset, {\{a, b, c, d, e\} \choose 2})$.

\begin{figure}[h]
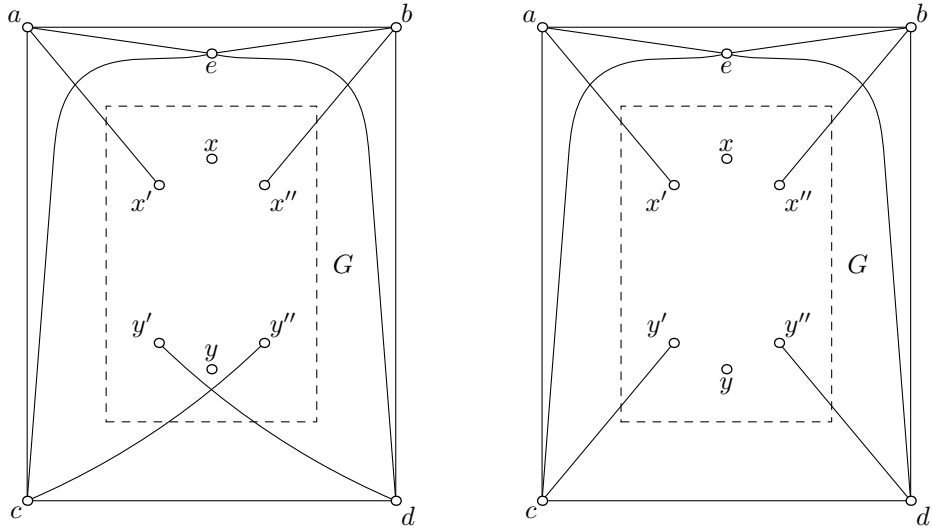

  \center
  \includegraphics{reductions.2}
  \rule{3em}{0em}
  \includegraphics{reductions.1}
  \caption{Graphs $G'$ and $G''$\label{fig.k5}}
\end{figure}

\begin{theorem}
  $\Pi_{sK_5}$ is NP-complete. 
\end{theorem}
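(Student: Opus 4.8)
The plan is to prove NP-completeness by a polynomial reduction from $\Gamma^3$, which is NP-complete by Theorem~\ref{th:delta3}. Membership in NP is immediate: a certificate consists of the five branch vertices together with the vertex sets of the ten connecting paths, and checking that these form an induced subdivision of $K_5$ takes polynomial time. The reason for choosing $\Gamma^3$ rather than $\Gamma$ is the degree bound: in an induced subdivision of $K_5$ the five branch vertices have degree~$4$ and all other vertices have degree~$2$, so if the instance graph has maximum degree at most~$3$ then no branch vertex can be an interior vertex of it. This is the main lever in the backward direction.

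First I would fix a rigid gadget (the graphs $G'$ and $G''$ of Figure~\ref{fig.k5}) that carries most of the skeleton of a $K_5$-subdivision: several branch vertices and all but the connecting paths that we wish to route through the instance graph, together with two attachment vertices to be joined to $x$ and $y$. Given an instance $(G,x,y)$ of $\Gamma^3$, the reduced graph is obtained from the disjoint union of $G$ and this gadget by adding the prescribed edges at $x$ and $y$; the result has size linear in that of $G$, so the reduction is polynomial.

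For the forward direction, assuming $G$ has a hole $C$ through $x$ and $y$, I would exhibit an explicit induced subdivision of $K_5$: the gadget supplies the fixed branch vertices and connecting paths, while the two $x$--$y$ subpaths of $C$ are used as the remaining connecting path(s). The only thing to verify is inducedness, namely that there is no chord between $C$ and the gadget and none among the used paths beyond the allowed incidences; this holds by construction because the attachment edges at $x,y$ are the only edges between $G$ and the gadget.

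The backward direction is the main obstacle. Assuming the reduced graph contains an induced subdivision $S$ of $K_5$, I would first use $\Delta(G)\le 3$ to confine all five branch vertices of $S$ to the gadget and the attachment points, since every other vertex of $G$ has degree at most~$3$. The remaining work is to show that the gadget is \emph{rigid}, in the sense that $S$ must use it in essentially the one intended way and in particular cannot lie entirely inside it (so the gadget alone must contain no induced subdivision of $K_5$). Consequently the connecting path(s) not supplied by the gadget are forced to run through $G$ between $x$ and $y$, and reassembling them yields a chordless cycle through $x$ and $y$, i.e.\ a hole, in $G$. The delicate points will be ruling out subdivisions that reuse or reroute gadget paths in unintended ways, and checking that the forced routes through $G$ genuinely close up into a single induced cycle rather than merely an induced path; the degree bound together with a careful, asymmetric design of the gadget are what make this control possible.
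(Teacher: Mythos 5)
Your overall strategy --- reduce from $\Gamma^3$, and use $\Delta(G)\le 3$ to force the five branch vertices (which have degree $4$) into the gadget --- is exactly the paper's. But the substance of this proof is the gadget itself, and that is where your proposal has a genuine gap. The one design decision you do commit to, ``two attachment vertices to be joined to $x$ and $y$,'' cannot work. If the gadget meets $G$ in only two edges, one at $x$ and one at $y$, then any induced subdivision of $K_5$ can route at most one connecting path through $G$: a connecting path that enters $G$ must enter and leave through attachment edges, distinct connecting paths cannot share them, and there are only two. So in the backward direction all you can extract from $G$ is a single induced $x$--$y$ path. Such a path exists whenever $x$ and $y$ lie in the same component (take a shortest path), so it in no way certifies a hole through $x$ and $y$, and your reduction would answer YES on NO-instances of $\Gamma^3$. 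The paper's construction avoids precisely this: it \emph{deletes} the four edges at $x$ and $y$ and instead attaches four of the five new vertices to the four former neighbours $x', x'', y', y''$ (edges $ax', bx'', cy'', dy'$); the five new vertices $a,b,c,d,e$ carry eight of the ten edges of $K_5$ --- all except $ad$ and $bc$ --- so an induced subdivision is forced to route \emph{two} connecting paths through $G\setminus\{x,y\}$, one from $x'$ to $y'$ (realising $ad$) and one from $x''$ to $y''$ (realising $bc$). Inducedness of the subdivision makes these two paths vertex-disjoint with no edges between them, and re-inserting the degree-$2$ vertices $x$ and $y$ closes them into a hole. This is exactly the mechanism that upgrades ``merely an induced path'' to a hole --- the point you flag as delicate but leave unresolved, and which your two-edge attachment scheme makes unresolvable.

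A second, smaller omission: the attachment pattern must match the (unknown) pairing of $x$'s neighbours with $y$'s neighbours around the hole. A hole through $x,y$ may pair $x'$ with $y'$ and $x''$ with $y''$, or $x'$ with $y''$ and $x''$ with $y'$, and a single fixed gadget handles only one of these. That is why the paper builds two graphs $G'$ and $G''$ (differing only in whether $c,d$ are joined to $y'',y'$ or to $y',y''$) and asks whether \emph{at least one} of them contains an induced subdivision of $K_5$. You mention $G'$ and $G''$ by name but then construct a single reduced graph, so even with the correct gadget your forward direction would fail for half of the possible holes.
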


\begin{proof}
We consider an instance $(G, x, y)$ of $\Gamma^3$.  Let us denote by
$x', x''$ the two neighbors of $x$ and by $y', y''$ the two neighbors
of $y$.  Let us build a graph $G'$ by adding five vertices $a, b, c,
d, e$.  We add the edges $ab, bd, dc, ca, ea, eb, ec, ed, ax', bx'',
cy'', dy'$.  We delete the edges $xx', xx'', yy', yy''$.  We define a
very similar graph $G''$, the only change being that we do not add
edges $cy'', dy'$ but edges $cy', dy''$ instead.  See
Figure~\ref{fig.k5}.

Now in $G'$ (and similarly $G''$) every vertex has degree at most 3,
except for $a, b, c, d, e$.  We claim that $G$ contains a hole going
through $x$ and $y$ if and only if at least one of $G', G''$ contains
an induced subdivision of $K_5$.  Indeed, if $G$ contains a hole
passing through $x, x', y', y, y'', x''$ in that order then $G'$
obviously contains an induced subdivision of $K_5$, and if the hole
passes in order through $x, x', y'', y, y', x''$ then $G''$ contains
such a subgraph.  Conversely, if $G'$ (or symmetrically $G''$)
contains an induced subdivision of $K_5$ then $a, b, c, d, e$ must be
the vertices of the underlying $K_5$, because they are the only
vertices with degree at least~$4$.  Hence there is a path from $x'$ to
$y'$ in $G\setminus \{x, y\}$ and a path from $x''$ to $y''$ in
$G\setminus \{x, y\}$, and consequently a hole going through $x, y$ in
$G$.
\end{proof}

\subsection{$\Pi_B$ for small B's}

Here, we survey the complexity $\Pi_B$ when $B$ has at most four
vertices.  By the remarks in the introduction, if $|V| \leq 3$ then
$\Pi_{(V, D, F)}$ is polynomial.  Up to symmetries, we are left with
twelve s-graphs on four vertices as shown below. 

For the following two s-graphs, there is a polynomial algorithm using
three-in-a-tree.  The two algorithms are essentially similar to those
for thetas and pyramids (see Figure~\ref{fig:ppt}).
See~\cite{chudnovsky.seymour:theta} for details.

\noindent
\begin{center} 
  \includegraphics{bigraphs.3} 
  \rule{3em}{0ex}
  \includegraphics{bigraphs.8} 
\end{center}

\noindent The next two s-graphs yield an NP-complete problem:

\noindent
\begin{center} 
  \parbox[c]{1cm}{\includegraphics{bigraphs.4}} (by $\Gamma_{\{C_4\}}$)
  \rule{1em}{0ex}
  \parbox[c]{1cm}{\includegraphics{bigraphs.5}} (by $\Gamma_{\{K_3\}}$)
\end{center}

\noindent For the next seven graphs on four vertices, we could not get
an answer:

\noindent
\begin{center}
\includegraphics{bigraphs.1}\rule{.5em}{0ex}
\includegraphics{bigraphs.6}\rule{1em}{0ex}
\includegraphics{bigraphs.7}\rule{1em}{0ex}
\includegraphics{bigraphs.9}\rule{1em}{0ex}
\includegraphics{bigraphs.10}\rule{1em}{0ex}
\includegraphics{bigraphs.11}\rule{1em}{0ex}
\includegraphics{bigraphs.12}
\end{center}

\noindent For the last graph represented below, it was proved recently
by Trotignon and Vu\v skovi\'c~\cite{nicolas.kristina:one} that the
problem can be solved in time $O(nm)$, using a method based on
decompositions.  

\noindent
\begin{center}
\includegraphics{bigraphs.2}\rule{1em}{0ex}
\end{center}

In conclusion we would like to point out that, except for the problem solved
in~\cite{nicolas.kristina:one}, every detection problem associated with an
s-graph for which a polynomial time algorithm is known can be solved either
by using three-in-a-tree or by some easy brute-force enumeration.

\newpage

\end{document}